\newcommand{\first}{\textit{first}}
\newcommand{\length}{\textit{length}}
\newcommand{\Tree}{\textit{Tree}}
\newcommand{\K}{\textsf{K}}
\newcommand{\liff}{\leftrightarrow}
\newcommand{\A}{\mathsf{A}}
\newlength{\ksize}
\newlength{\rsize}
\newlength{\esize}
\newlength{\fsize}
\newcommand{\csetsc}[2]{\{#1 \mid #2\}}
\newcommand{\Mods}{\mathbf{Mods}}
\newcommand{\Mod}{\textsf{Mod}\xspace}
\newcommand{\Rel}{\textsf{Rel}\xspace}
\newcommand{\Atom}{\textsf{Atom}\xspace}
\newcommand{\Nom}{\textsf{Nom}\xspace}
\newcommand{\Prop}{\textsf{Prop}\xspace}
\newcommand{\ALit}{\textsf{ALit}\xspace}
\newcommand{\Ext}{\mathrm{Ext}}
\newcommand{\M}{\mathcal{M}}
\newcommand{\N}{\mathcal{N}}
\newcommand{\T}{\mathcal{T}}
\newcommand{\ssim}{\,\underline{\shortrightarrow}_\sigma\,}
\newcommand{\ssiminv}{\,\underline{\shortrightarrow}_{\sigma^{-1}}\,}
\newcommand{\sssim}{\,\underline{\shortrightarrow}_{\bar{\sigma}}\,}
\newcommand{\bisim}{\mathop{\underline{\leftrightarrow}}}
\newcommand{\boxr}[1]{[#1]}
\newcommand{\diamr}[1]{\tup{#1}}
\newcommand{\last}{\textit{last}}
 \newcommand{\cC}{\mathcal{C}\xspace}
 \newcommand{\cS}{\mathcal{S}\xspace}
 \newcommand{\cM}{\mathcal{M}\xspace}
 \newcommand{\cN}{\mathcal{N}}
 \newcommand{\cP}{\mathcal{P}\xspace}
 \newcommand{\Hl}{\mathcal{H}}
\newcommand{\cset}[1]{\{ #1 \}}
\newcommand{\tup}[1]{\langle #1 \rangle}
\newcounter{todoCount}
\newtheorem{theorem}{Theorem}
\newtheorem{definition}{Definition}
\newtheorem{proposition}{Proposition}
\newtheorem{corollary}{Corollary}
\newtheorem{example}{Example}
\title{Symmetries in Modal Logics}
\author{Carlos Areces \institute{FaMAF\\ Universidad Nacional de
    C\'ordoba\\C\'ordoba, Argentina} \institute{CONICET, Argentina}
\and 
Guillaume Hoffmann\institute{FaMAF\\ Universidad Nacional de
    C\'ordoba\\C\'ordoba, Argentina} 
\and Ezequiel Orbe\institute{FaMAF\\ Universidad Nacional de
    C\'ordoba\\C\'ordoba, Argentina}\institute{CONICET, Argentina}}
\begin{document}
\maketitle

\begin{abstract}
  We generalize the notion of symmetries of propositional formulas in
  conjunctive normal form to modal formulas.  Our framework uses the
  coinductive models introduced in~\cite{arec:coin10} and, hence, the
  results apply to a wide class of modal logics including, for
  example, hybrid logics.  Our main result shows that the symmetries
  of a modal formula preserve entailment: if $\sigma$ is a symmetry of
  $\varphi$ then $\varphi \models \psi$ if and only if $\varphi
  \models \sigma(\psi)$.
\end{abstract}

\section{Symmetries in Automated Theorem Proving}\label{sec:intro}

Many concrete, real life problems present symmetries.  For instance,
if we want to know whether trying to place three pigeons in two
pigeonholes results in two occupying the same nest, it does not really
matter which of all pigeons gets in each pigeonhole.  Starting by
putting the first pigeon to the first pigeonhole is the same as if we
put the second one in it.  In mathematical and common-sense reasoning
these kinds of symmetries are often used to reduce the difficulty of
reasoning --- one can analyze in detail only one of the symmetric
cases and generalize the result to the others.  The exact same is done
in propositional theorem proving. Many problem classes and, in
particular, those arising from real world applications, display a
large number of symmetries; and current SAT solvers take into account
these symmetries to avoid exploring duplicate branches of the search
space. In the last years there has been extensive research in this
area, focusing on how to define symmetries, how to detect them
efficiently, and how SAT solvers can better profit from
them~\cite{Prasad05asurvey}.

Informally, we can define a symmetry of a discrete object as a
permutation of its components that leaves the object, or some aspect
of it, intact (think of the rotations of a spatial solid).  In the context
of SAT solving we can formally define a symmetry as a permutation of
the variables (or literals) of a problem that preserves its
structure and, in particular, its set of solutions.  Depending on
which aspect of the problem is kept invariant, symmetries are
classified in the literature into semantic or
syntactic~\cite{Benhamou:1992vx}.  Semantic symmetries are intrinsic
properties of a Boolean function that are independent of any
particular representation, i.e., a permutation of variables that does
not change the value of the function under any variable
assignment. Syntactic symmetries, on the other hand, correspond to the
specific algebraic representation of the function, i.e., a permutation
of variables or literals that does not change the representation. A
syntactic symmetry is also a semantic symmetry, but the converse does
not always hold. 

In~\cite{Krishnamurthy:1985ug}, Krishnamurthy used symmetries in the
context of SAT solving. In this article, the notions of \emph{global}
and \emph{local} symmetries as inference rules are used to strengthen
resolution-based proof systems for propositional logic, showing that
they can shorten the proofs of certain difficult propositional
problems like the pigeonhole principle. Since then, many articles discuss how to detect and exploit symmetries. Most of them can be
grouped into two different approaches: static symmetry breaking and
dynamic symmetry breaking.  In the first
approach~\cite{Crawford:1992wz,Crawford:1996wa,Aloul:2002ww},
symmetries are detected and eliminated from the problem statement
before the SAT solver is used. They work as a preprocessing step.  In
contrast, dynamic symmetry breaking~\cite{Brown:1989uw,Benhamou:1992vx,Benhamou:1994tr} detects
and breaks symmetries during the search space exploration.  The first
approach can be used with any theorem prover; the second is prover 
dependent but it can take
advantage of symmetries that emerge during search.
Despite their differences they share the same goal: to identify
symmetric branches of the search space and guide the SAT solver away
from symmetric branches already explored.  A third alternative was
introduced in~\cite{Benhamou:uv}, which combines symmetry reasoning
with clause learning~\cite{ryan} in Conflict-Driven Clause Learning
SAT solvers~\cite{Een:2004uh}. The idea is to augment clause
learning by using the symmetries of the problem to learn the
symmetric equivalents of conflict-induced clauses. This approach is
particularly appealing as it does not imply major modifications
to the search procedure and the required modifications to the clause
learning process are minor.

Symmetries have been extensively investigated and successfully
exploited for propositional logic SAT and some results involve other
logics, see~\cite{Audemard:2002ti,Audemard:2004up,fontaine2011}.  To the best
of our knowledge, symmetries remain largely unexplored in automated
theorem proving for modal logics.  

In this paper, we generalize the notion of symmetries to modal formulas in
conjunctive normal form for different modal logics including the basic
modal language over different model classes (e.g., reflexive, linear
or transitive models), and logics with additional modal operators
(e.g., universal and hybrid operators).  The main result of the
article shows that symmetries of a modal formula preserve entailment:
if $\sigma$ is a symmetry of $\varphi$ then $\varphi \models \psi$ if
and only if $\varphi \models \sigma(\psi)$.  In cases where the modal
language has a tree model
property, we can actually use a more flexible notion of symmetry that
enables different permutations to be applied at each modal depth.  
We also present a method to detect the symmetries of modal
formulas in conjunctive normal form. This method reduces the symmetry
detection problem to the graph automorphism problem. A  general graph
construction algorithm, suitable for many modal logics, is presented.
In order to tackle a broad range of modal languages that may or may not enjoy
the tree model property, we use in our work  the semantics provided by
coinductive modal models~\cite{arec:coin10} instead of the more familiar Kripke relational
semantics. Coinductive modal models provide a homogeneous framework
to investigate different modal languages at a greater level of
abstraction. A consequence of this is that results obtained in the
coinductive framework can be easily extended to concrete modal
languages by just giving the appropriate definition of the model
classes and fixing some parameters.

In Section~\ref{sec:defs} we present modal logics and coinductive
modal models.  In Section~\ref{sec:lit-symm} we define modal
symmetries, together with the appropriate notion of simulation to show
that symmetries preserve modal entailment.  In
Section~\ref{sec:layering} we introduce layered permutations and show
that they can be used when the modal logic has the adequate notion of
the tree model property. In Section \ref{sec:detection} we present a
graph construction algorithm to detect symmetries in modal formulas
and prove its correctness. We draw our conclusions and discuss future
research in Section~\ref{sec:conclusion}.


\section{Modal Logics and Coinductive Models} \label{sec:defs}

In what follows, we will assume basic knowledge of classical modal
logics and refer the reader to~\cite{MLBOOK,BBW06} for technical
details.  The coinductive framework for modal logics was introduced
in~\cite{arec:coin10} to investigate normal forms for a wide number of
modal logics.  Its main characteristic is that it allows the
representation of different modal logics in a homogeneous form.
For a start, the set of formulas is as for the basic (multi) modal
logic.

\begin{definition}[Modal formula]\label{modalformula}
  A modal signature is a pair $\tup{\Atom, \Mod}$ where \Atom
  and \Mod are two countable, disjoint sets. We usually assume that
  $\Atom$ is infinite. The set of modal formulas over $\langle
  \Atom,$ $\Mod \rangle$ is defined as
$$
\varphi ::= a \mid \lnot \varphi \mid \varphi \lor \varphi \mid
\boxr{m}\varphi,
$$

\noindent
for $a \in \Atom$, $m \in \Mod$. $\top$ and $\bot$ stand for 
an arbitrary tautology and contradiction, respectively. Connectives such
as $\land, \to$ and $\diamr{m}$, are defined as usual.
\end{definition}

We will define a symmetry as a permutation of literals that
preserve the structure of formulas in conjunctive normal form (CNF).

\begin{definition}[Literals and modal CNF]
  A \emph{literal} $l$ is either an atom $a$ or its negation $\neg a$.  The
  set of literals over $\Atom$ is $\ALit = \Atom \cup \{\neg a \mid a
  \in \Atom\}$. 


  A modal formula is in \emph{modal conjunctive normal form (modal
    CNF)} if it is a conjunction of modal CNF clauses. A \emph{modal
    CNF clause} is a disjunction of atoms and modal literals. A
  \emph{modal literal} is a formula of the form $\boxr{m} C$ or $\neg
  \boxr{m} C$ where $C$ is a modal CNF clause.  Every modal formula
  can be transformed into an equisatisfiable formula in modal CNF in
  polynomial time (see~\cite{arec:coin10,sebastiani} for details).
\end{definition} 

A formula in modal CNF can be represented as a set of modal CNF clauses
(interpreted conjunctively), and each clause can be
represented as a set of atom and modal literals
(interpreted disjunctively).  With the set representation we can
disregard the order and multiplicity in which clauses and literals
appear.  This will be important when we define symmetries below. In the rest
of the paper we will assume that modal formulas are in modal CNF, and
we will refer to them as modal CNF formulas.
  
\begin{example}
The modal formula $\varphi = \diamr{m}(p \wedge q \wedge p) \wedge
\boxr{m} \neg r$ is equisatisfiable to the modal CNF formula $ \varphi' =\{ \{ \neg \boxr{m}
\{ \neg p, \neg q \} \} , \{ \boxr{m} \{\neg r \}\} \}$.  
\end{example}

Up to now, we have not departed from the standard presentation of
classical modal logic in important ways.  The main change introduced
by the coinductive approach is with the definition of model and
semantic conditions.

\begin{definition}[Models]\label{models}
  Let $\cS = \tup{\Atom, \Mod}$ be a modal signature and $W$ be a
  fixed, non-empty set. $\Mods_{W}$, the class of all models with
  domain $W$, for the signature $\cS$, is the class  of all tuples
  $\tup{w, W, V, R}$ such that $w \in W$, $V(v) \subseteq \Atom$ for
  all $v \in W$, and
$$
R(m,v) \subseteq \Mods_{W} \mbox{\ for $m \in \Mod$ and $v \in W$.}
$$

\noindent
Given a model $\cM = \tup{w, W, V, R}$ we will say that $w$ is the
point of evaluation and denote it as $w^\cM$, $W$ is the domain and
denote it as $|\cM|$, $V$ is the modal valuation and denote it as $V^\cM$,
and $R$ is the accessibility relation and denote it as $R^\cM$.
$\Mods$ denotes the class of all models over all domains, $\Mods =
\bigcup_{W} \Mods_W$.

Given $\cM \in \Mods_{W}$, let $\Ext(\cM)$, the \emph{extension of
  $\cM$}, be the smallest subset of $\Mods_{W}$ that contains $\cM$
and is such that if $\cN \in \Ext(\cM)$, then $R^{\cN}(m,v) \subseteq
\Ext(\cM)$ for all $m \in \Mod$, $v \in W$.
\end{definition}

The definition of a coinductive modal model is similar to the usual
definition of a Kripke pointed model. The difference lies in the way
the accessibility relation is defined. In particular, for each $m$ and
each state $w$, $R(m,w)$ is defined as the set of (potentially
different) models accessible from $w$ through the $m$ modality. Observe
that for each $W$, $\Mods_{W}$ is well-defined (coinductively), and so
does $\Mods$, the class of all models. Our results will apply not only
to $\Mods$ but to many of its subclasses.  We will be interested in
classes which are \emph{closed under accessibility relations}
(\emph{closed classes} for short): $\cM \in \cC$ implies $\Ext(\cM)
\subseteq \cC$.  In the rest of the paper we will only consider
classes of models closed under accessibility relations.

\begin{example}
  Consider the pointed Kripke model in
  Figure~\ref{fig:model-diff}a, and its equivalent coinductive modal model in
  Figure~\ref{fig:model-diff}b. The point of evaluation in each model is circled. 
The main difference is that the relation of a coinductive model leads to another
coinductive model, whereas in a Kripke model the relation leads
to another point of the same model.

\begin{figure}[ht]
  \begin{center}
    \begin{minipage}[ht]{.35\linewidth}
      \begin{center}
        \includegraphics[scale=0.6,keepaspectratio=true]{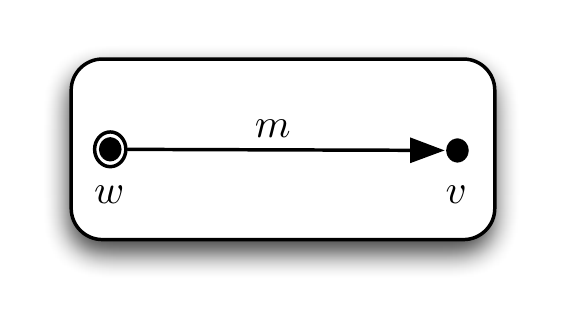}
          
        (a)
      \end{center}
    \end{minipage}
    \begin{minipage}[ht]{.55\linewidth}
      \begin{center}
        \includegraphics[scale=0.6,keepaspectratio=true]{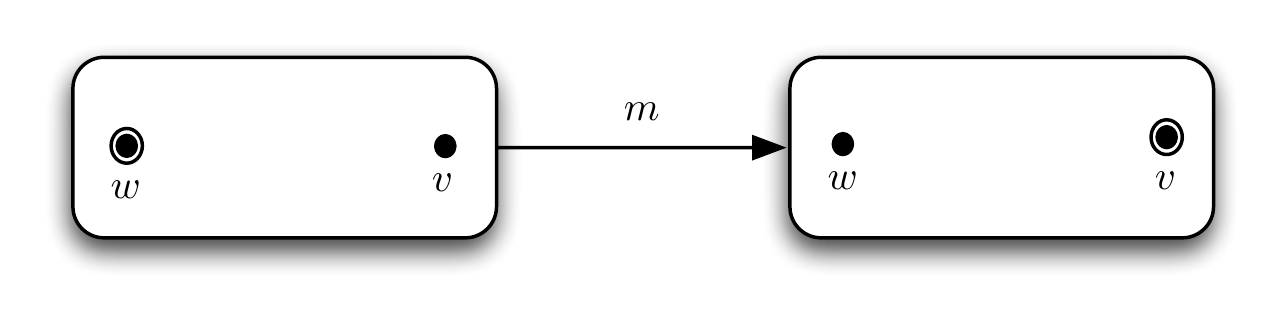}
          
        (b)
      \end{center}
    \end{minipage}
  \end{center}
  \caption{a) A Kripke model. b) Equivalent coinductive  model.  \label{fig:model-diff}}
\end{figure}






\end{example}

We are now ready to introduce the definition of the satisfiability
relation $\models$.

\begin{definition}[Semantics]\label{def:sem}
  Let $\varphi$ be a formula in modal CNF and $\cM = \tup{w,W,V,R}$ a model in
  $\Mods$. We define $\models$ for modal CNF formulas, clauses and literals
  as
  \begin{center}
    \begin{tabular}{lll}
      $\cM \models \varphi$ &  \mbox{ iff }
      & for all clauses $C \in \varphi$ we have $\cM \models C$\\  
      $\cM \models C$ &  \mbox{ iff }
      &  there is some literal $l \in C$ such that $\cM \models l$\\
      $\cM \models a$ & \mbox{ iff } & $a \in V(w)$ \mbox{ for } $a \in \Atom$\\
      $\cM \models \neg a$ & \mbox{ iff } & $a \not \in V(w)$ \mbox{ for } $a \in \Atom$\\
      $\cM \models \boxr{m} C$ &  \mbox{ iff } & $\cM' \models C$, for all $\cM' \in R(m,w)$\\  
      $\cM \models \neg \boxr{m} C$ &  \mbox{ iff } & $\cM \not \models \boxr{m}C$.
    \end{tabular}
  \end{center}

\noindent
For $\cC$ a class of models, we write $\cC \models \varphi$ whenever
$\cM \models \varphi$ for every $\cM$ in $\cC$, and we say that
$\Gamma_{\cC} = \cset{\varphi \mid \cC \models \varphi}$ is the
\emph{logic} defined by $\cC$.

The set of models in $\cC$ of a formula $\varphi$ is the set $\Mod_\cC(\varphi) 
=\csetsc{\cM}{\cM \in \cC \mbox{ and } \cM \models \varphi}$ (when
$\cC$ is clear from the context we will just write $\Mod(\varphi)$).  We
say that $\psi$ \emph{can be inferred from} $\varphi$ in $\cC$ and
write $\varphi \models_\cC \psi$ if $\Mod_\cC(\varphi) \subseteq
\Mod_\cC(\psi)$.
\end{definition}

As shown in~\cite{arec:coin10}, the logic $\Gamma_\Mods$ (generated by
the class of all possible models) coincides with the basic multi-modal
logic \textsf{K}.  By properly restricting the model class we can
capture different modal logics.  Let us call a predicate $P$ on models
a \emph{defining condition} for a class $\cC$ whenever $\cC$ is  such
that $\cM \in \cC$ if and only if $P(\cM)$
holds.  Consider the signature $\cS = \tup{\Atom, \Mod}$ where $\Atom
= \Prop \cup \Nom$, $\Mod = \Rel \cup \cset{\A} \cup \csetsc{@_i}{i
  \in \Nom}$; and $\Prop = \cset{p_1,p_2,\ldots}$, $\Nom =
\cset{n_1,n_2,\ldots}$ and $\Rel = \cset{r_1,r_2,\ldots}$ are mutually
disjoint, countable infinite sets. In what follows, we will usually be
interested in sub-languages of the language defined over $\cS$ by
Definition~\ref{modalformula}.

\begin{figure}[ht]
  \begin{center}
    \begin{tabular}{|c|l@{\ }c@{\ }l@{\ }c@{\ }l|} \hline

      Class & \multicolumn{5}{l|}{Defining condition} \\ \hline

      $\cC_m^\K$ & $\cP_{m}^{\K}(\cM)$ & $\Longleftrightarrow$ &
      $R^{\cM}(m,w)$ & $\subseteq$ & $\cset{\tup{v,|\cM|,V^{\cM},R^{\cM}} \mid v \in |\cM|}, m \in \Rel$  \\ \hline

      $\cC_{\A}$ & $\cP_\A(\cM)$ & $\Longleftrightarrow$ &
      $R^{\cM}(\A, w)$  &  = & $\cset{\tup{v, |\cM|,V^{\cM}, R^{\cM}} \mid v \in |\cM|}$\\ \hline

      $\cC_{@_i}$ & $\cP_{@_i}(\cM)$ & $\Longleftrightarrow$ &
      $R^{\cM}(@_i,w)$ & = & $\cset{\tup{v,|\cM|,V^{\cM},R^{\cM}} \mid i \in V(v)},  i \in \Nom$\\\hline
      $\cC_{\mbox{\tiny \Nom}}$ & $\cP_{\mbox{\tiny \Nom}}(\cM)$ & $\Longleftrightarrow$ &
      \multicolumn{3}{@{}l|}{$\{w \mid i \in V^{\cM}(w)$\} is a singleton, $\forall i \in \Nom$} \\ \hline 
    \end{tabular}
  \end{center}
  \vspace*{-.3cm}
  \caption{Defining conditions for different modal
    logics}\label{fig:classes}
\end{figure}

Figure~\ref{fig:classes} introduces a number of closed model classes
by means of their defining conditions. Observe that $\cP_{m}^{\K}$ is
true for a model $\cM$ if every successor of $w^{\cM}$ is identical to
$\cM$ except perhaps on its point of evaluation. We call $m$ a
\emph{relational modality} when it is interpreted in $\cC_{m}^{\K}$
because over this class they behave as classical relational
modalities~\cite{arec:coin10}.

We can capture different modal operators, like the ones from hybrid
logics~\cite{arec:hybr05b}, by choosing the proper class of models.
Predicates $\cP_\A$ and $\cP_{@_i}$, for instance, impose conditions
on the point of evaluation of the accessible models restricting the evaluation
to the class of models where the relation is, respectively, the total
relation ($\forall x y . R(x,y)$) and the `point to all $i$' relation
($\forall x y . R(x,y) \liff i(y)$). Observe that whenever the atom
$i$ is interpreted as a singleton set, the `point to all $i$' relation
becomes the usual `point to $i$' relation ($\forall x y . R(x,y) \liff
y = i$) of hybrid logics.  Finally, predicate $\cP_{\Nom}$ turns
elements of $\Nom$ into nominals, i.e., true at a unique element of
the domain of the model.  

An interesting feature of this setting is that we can express the
combination of modalities as the intersection of their respective
classes.  For example, $\cC_{\Hl(@)}$, the class of models for the
hybrid logic $\Hl(@)$, can be defined as follows:
$$
\begin{array}{c}
  \cC_{\Hl(@)}  =  \cC_{\Nom} \cap \cC_@  \cap \cC_{\Rel}\mbox{, where }\\
  \cC_{@}   =  \bigcap_{i \in \Nom}\cC_{@_i} \mbox{, and }
  \cC_{\Rel}  =  \bigcap_{m \in \Rel}\cC^{\K}_{m}.
\end{array}
$$

The crucial characteristic of the coinductive approach is that all these 
different modal operators are captured using the same semantic condition 
introduced in Definition~\ref{def:sem}. All the details defining each 
particular operator are now introduced as properties of the accessibility 
relation. As a result, a unique notion of bisimulation is sufficient to 
cover all of them.  

\begin{definition}[Bisimulations]\label{def:bisim}
  Given two models $\cM$ and $\cM'$ we say that \emph{$\cM$ and $\cM'$
    are bisimilar} (notation $\cM \bisim \cM'$) if $\cM \mathop{Z}
  \cM'$ for some relation $Z \subseteq \Ext(\cM) \times \Ext(\cM')$
  such that whenever $\tup{w,W,V,R} \mathop{Z}$ $\tup{w',W',V',R'}$ we
  have the following properties:
  \begin{itemize}
  \item \textbf{Harmony:} $a \in V(w)$ iff $a \in V'(w')$, for all $a
    \in \Atom$.
  \item \textbf{Zig:} $\cN \in R(m,w)$ implies $\cN \mathop{Z} \cN'$
    for some $\cN' \in R'(m,w')$.

  \item \textbf{Zag:}$\cN' \in R'(m,w')$ implies $\cN \mathop{Z} \cN'$
    for some $\cN \in R(m,w)$.
  \end{itemize}
  Such $Z$ is called a \emph{bisimulation between $\cM$ and $\cM'$}.
\end{definition}

The classic result of invariance of modal formulas under
bisimulation~\cite{MLBOOK} can easily be proved.

\begin{theorem}\label{invariance}
  If $\cM \bisim \cM'$, then $\cM \models \varphi$ iff $\cM' \models
  \varphi$, for all $\varphi$.
\end{theorem}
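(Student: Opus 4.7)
The plan is a standard structural induction on the modal CNF formula $\varphi$, but strengthened so that the inductive hypothesis applies uniformly to every pair of states related by the (fixed) bisimulation $Z$. That is, I will prove: if $\tup{w,W,V,R} \mathop{Z} \tup{w',W',V',R'}$, then for every subformula $\psi$ appearing in $\varphi$, $\tup{w,W,V,R} \models \psi$ iff $\tup{w',W',V',R'} \models \psi$. The induction ranges over the four syntactic layers described in the modal CNF grammar: conjunctions of clauses, clauses (disjunctions of literals), modal literals $\boxr{m}C$ or $\neg \boxr{m}C$, and propositional literals $a$ or $\neg a$.

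For the atomic literal cases, $\cM \models a$ iff $a \in V(w)$, which by the \textbf{Harmony} clause is equivalent to $a \in V'(w')$, hence to $\cM' \models a$; the case $\neg a$ is immediate from this. For clauses (disjunction) and CNF formulas (conjunction), the equivalence passes directly through the inductive hypothesis applied to each disjunct/conjunct, since the semantic clauses quantify uniformly over literals or clauses.

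The only interesting case is the modal literal $\boxr{m}C$. Suppose $\cM \models \boxr{m}C$ and let $\cN' \in R'(m,w')$. By \textbf{Zag}, there exists $\cN \in R(m,w)$ with $\cN \mathop{Z} \cN'$. From $\cM \models \boxr{m}C$ we get $\cN \models C$; because $Z \subseteq \Ext(\cM) \times \Ext(\cM')$ and $\cN \in \Ext(\cM)$, $\cN'\in \Ext(\cM')$, the inductive hypothesis applies to the clause $C$ along the pair $(\cN,\cN')$, yielding $\cN' \models C$. Since $\cN'$ was arbitrary, $\cM' \models \boxr{m}C$. The converse direction is symmetric, using \textbf{Zig} in place of Zag. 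The negated modal literal case then follows from the semantic clause $\cM \models \neg \boxr{m}C$ iff $\cM \not\models \boxr{m}C$, which reduces to the case already handled.

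The only subtle point to check is that $Z$ indeed relates the required successor pairs: this is precisely guaranteed by the quantification $Z \subseteq \Ext(\cM) \times \Ext(\cM')$ in Definition~\ref{def:bisim}, together with the fact that every accessible model $\cN \in R(m,w)$ belongs to $\Ext(\cM)$ by the definition of extension. There is no difficulty arising from the coinductive nature of models because the induction is entirely on the finite syntactic structure of $\varphi$; the models themselves need not be well-founded. Hence no technical obstacle arises, and the main work is simply the routine bookkeeping across the four syntactic cases.
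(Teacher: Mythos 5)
Your proof is correct; the paper itself gives no argument for this theorem (it only remarks that the classic invariance result "can easily be proved" and cites the standard reference), and your strengthened structural induction over all $Z$-related pairs, using Harmony for literals and Zig/Zag for the modal cases, is exactly the standard argument being alluded to. Your observation that the induction is on the finite syntax of $\varphi$, so the coinductive (non-well-founded) nature of the models causes no difficulty, is the right point to flag.
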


As stated in~\cite{arec:coin10},  this 
general notion of bisimulation works for every modal logic definable
as a closed subclass of $\Mods$.


 \section{Modal Symmetries}\label{sec:lit-symm}

We will show that consistent symmetries for modal
 formulas behave similarly as in the propositional case and, hence,
 could assist in modal theorem proving.  Let us start by introducing
 the basic notions.

 \begin{definition}[Complete, Consistent and Generated sets of
   literals]
     A set of literals $L$ is \emph{complete} if for each $a \in \Atom$
  either $a \in L$ or $\neg a \in L$. It is \emph{consistent} if for
  each $a \in \Atom$ either $a \not \in L$ or $\neg a \not \in L$. Any
  complete and consistent set of literals $L$ defines a unique
  valuation $v \subseteq \Atom$ as $a \in v$ if $a
  \in L$ and $a \not \in v$ if $\neg a \in L$.  For $S \subseteq
  \Atom$, the \emph{consistent and complete set of literals generated
    by} $S$ (notation $L_{S}$) is $S \cup \{\neg a \mid a \in \Atom
  \backslash S \}$.  
 \end{definition}

\begin{definition}[Permutation]
  A \emph{permutation} is a bijective function $\sigma : \ALit \mapsto
  \ALit$. For $L$ a set of literals, $\sigma(L) = \{\sigma(l) \mid l
  \in L\}$.
\end{definition}

In this work, we only deal with permutations defined over finite sets
of literals, namely, those occurring in the formula $\varphi$ under
consideration. This restricts us to finite groups of
symmetries~\cite{fraleigh2003first}. This kind of permutations can be
succinctly defined using cyclic notation, e.g., $\sigma=(p \ \neg
q)(\neg p \ q)$ is the permutation that makes $\sigma(p)=\neg q$,
$\sigma(\neg q)= p$, $\sigma(\neg p)= q$ and $\sigma(q)= \neg p$ and
leaves unchanged all other literals; $\sigma=(p\ q\ r)(\neg p\ \neg q\
\neg r)$ is the permutation $\sigma(p)=q$, $\sigma(q)=r$ and
$\sigma(r)=p$ and similarly for the negations. Finally, for $n \in
\mathbb{Z}_{\geq 1}$ and $\sigma$ a permutation, we denote the composition of
$\sigma$ with itself $n$ times by $\sigma^n$. $\sigma^0$ denote the
identity permutation, $\sigma^{-1}$ the inverse of $\sigma$, and
$\sigma^{-n}$, for $n \in \mathbb{Z}_{\leq 1}$ is the $n$-times composition of
$\sigma^{-1}$ with itself.


Because, in our language, atoms may occur in some modalities (like $@_i$) we should take
some care when we apply permutations to modal formulas.  We will say
that a modality is \emph{indexed by atoms} if its definition depends
on the value of an atom. If $m$ is indexed by an atom $a$ we will
sometimes write $m(a)$.

\begin{definition}\label{def:permutation}[Permutation of a formula]
  Let $\varphi$ be a modal CNF formula and $\sigma$ a permutation.
  We define $\sigma(\varphi)$ recursively:
$$
\begin{array}{rcll}
  \sigma(\varphi) & = & \csetsc{\sigma(C)}{C \in \varphi} & \mbox{ for
    $\varphi$ a modal CNF formula}\\ 
  \sigma(C) & = & \csetsc{\sigma(A)}{A \in C} & \mbox{ for $C$ a modal
    CNF clause}\\ 
  \sigma(\boxr{m}C) & = & \boxr{\sigma(m)}\sigma(C) &\\ 
  \sigma(\neg\boxr{m}C) & = & \neg\boxr{\sigma(m)}\sigma(C) &\\ 
\end{array}
$$
where $\sigma(m) =\sigma(m(a)) = m(\sigma(a))$ if $m$ is indexed by $a$, and
$\sigma(m) = m$ otherwise.
\end{definition}
  
\begin{definition}
  A permutation $\sigma$ is \emph{consistent} if for every
  literal $l$, $\sigma (\neg l) = \neg \sigma (l)$. A
  permutation $\sigma$ is a \emph{symmetry} for $\varphi$ if $\varphi
  = \sigma(\varphi)$, when conjunctions and disjunctions in $\varphi$
  are represented as sets.
\end{definition}

\begin{example} Trivially, the identity permutation $\sigma(l) = l$ is
  a consistent symmetry of any formula $\varphi$. More interestingly,
  consider $\varphi = \{\{\neg p, r\}$, $\{q, r\}$, $\{r,[m]\{\neg p,$
  $q\}\}\}$, then the permutation $\sigma = (p\ \neg q)(\neg p\ q)$
  is a consistent symmetry of $\varphi$.
\end{example}

Now, since a permutation over literals can be lifted to transform
some formula $\varphi$ into another formula $\sigma(\varphi)$,
we also want to consider permutations applied to models.
Indeed, if $\varphi$ is true in some $\M$, we intuitively want
$\sigma(\varphi)$ to be true in some model obtained from lifting
$\sigma$ to $\M$.
Thus the next step is
to define the notion of applying permutations to models.

\begin{definition}[Permutation of a model]\label{def:sigma-model2}
  Let $\sigma$ be a permutation and $\M = \langle w, W, V, R \rangle$
  a model. Then $\sigma(\M) = \langle w, W, V',R' \rangle$, where,
$$
\begin{array}{rll}
  V'(v) &=& \sigma (L_{V(v)}) \cap \Atom \quad \mbox{for all } v \in W
  \mbox{, and, }\\
  R'(m,v)&=&\{ \sigma(\N) \mid \N \in R(\sigma(m),v)\} \quad \mbox{for all
  } m \in \Mod \mbox{ and } v \in W.
\end{array}
$$
For $M$ a set of models, $\sigma(M) = \{\sigma(\M) \mid \M \in M\}$.
\end{definition}

The main ingredient to prove that symmetries preserve entailment is the
relation between models that we call $\sigma$-simulation.  
 
\begin{definition}[$\sigma$-simulation]\label{def:sigma-sim}
  Let $\sigma$ be a permutation. A \emph{$\sigma$-simulation} between
  models $\M = \langle w, W,V,R \rangle$ and $\M' = \langle
  w',W',V',R' \rangle$ is a non-empty relation $Z \subseteq \Ext(\M)
  \times \Ext(\M')$ that satisfies the following conditions:

  \begin{itemize}
  \item \textbf{Root:} $\M Z \M'$.

  \item \textbf{Harmony:} $l \in L_{V(w)}$ iff $\sigma (l) \in
    L_{V'(w')}$.
  \item \textbf{Zig:} If $\M Z \M'$ and $\N \in R(m,w)$ then $\N Z \N'$ for some $\N'
    \in R'(\sigma(m),w')$.

  \item \textbf{Zag:} If $\M Z \M'$ and $\N' \in R'(m,w')$ then $\N Z \N'$ for some
    $\N \in R(\sigma^{-1}(m),w)$.
  \end{itemize}

  We say that two models $\M$ and $\M'$ are $\sigma$-similar (notation
  $\M \ssim \M'$) if there is a $\sigma$-simulation $Z$ between them.
\end{definition}

Notice that while $\M \ssim \M'$ implies $\M' \ssiminv \M$, the relation
$\ssim$ is not symmetric (in particular $\sigma$ might differ from
$\sigma^{-1}$).
From the definition of $\sigma$-simulations it intuitively follows
that while they do not preserve validity of modal formulas (as is the
case with bisimulations) they do preserve validity of
\emph{permutations} of formulas.

\begin{proposition}\label{prop:lit2}
  Let $\sigma$ be a consistent permutation, $\varphi$ a modal CNF formula
  and $\M=\tup{w,W,V,R}$, $\M'=\tup{w',W',V',R'}$ models such that
  $\M\ssim \M'$.  Then $\M \models \varphi$ iff $\M' \models \sigma
  (\varphi)$.
\end{proposition}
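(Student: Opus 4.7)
The plan is to proceed by structural induction on $\varphi$, treating simultaneously the four syntactic categories used to build a modal CNF formula: conjunctions of clauses, disjunctions of literals, atom literals, and modal literals of the shape $\boxr{m}C$ or $\neg\boxr{m}C$. The hypothesis $\M \ssim \M'$ persists at every successor pair via the standard observation that the restriction of a $\sigma$-simulation $Z$ to $\Ext(\N)\times\Ext(\N')$ is again a $\sigma$-simulation whenever $\N Z \N'$, so the induction can descend through the modalities.

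First I would dispose of the base case, when $\varphi$ is an atom literal $l$. By Definition~\ref{def:sem}, $\M \models l$ iff $l \in L_{V(w)}$, and the Harmony clause of Definition~\ref{def:sigma-sim}, read as implicitly quantified over every pair in $Z$, gives $l \in L_{V(w)}$ iff $\sigma(l) \in L_{V'(w')}$, which is precisely $\M' \models \sigma(l)$. Consistency of $\sigma$ is exactly what makes this step coherent regardless of whether $l$ is presented as $a$ or as $\neg a$, since then $\sigma(\neg a)=\neg\sigma(a)$ agrees with the syntactic action specified in Definition~\ref{def:permutation}.

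For the inductive step on a modal literal $\boxr{m}C$ I would unfold both sides: $\M \models \boxr{m}C$ asserts that every $\N \in R(m,w)$ satisfies $C$, while $\M' \models \sigma(\boxr{m}C)=\boxr{\sigma(m)}\sigma(C)$ asserts that every $\N' \in R'(\sigma(m),w')$ satisfies $\sigma(C)$. For the forward direction, pick an arbitrary $\N' \in R'(\sigma(m),w')$; applying Zag at the modality $\sigma(m)$ supplies some $\N \in R(\sigma^{-1}(\sigma(m)),w)=R(m,w)$ with $\N Z \N'$, and the induction hypothesis on the clause $C$ at the pair $(\N,\N')$ delivers $\N' \models \sigma(C)$. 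The backward direction is symmetric, using Zig at $m$ so that $\N' \in R'(\sigma(m),w')$. Negated modal literals reduce to this case by Boolean duality. Clauses and formulas are then handled by observing that $\sigma$ distributes over the set of literals of a clause and the set of clauses of a formula (Definition~\ref{def:permutation}), so the set-based semantics of $\vee$ and $\wedge$ transfers the induction hypothesis pointwise.

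The main obstacle is the bookkeeping around modalities when pushing through $\boxr{m}$: one must invoke Zag at $\sigma(m)$ (so that $\sigma^{-1}$ sends us back into $R(m,w)$) and Zig at $m$ (so that its image under $\sigma$ lands in $R'(\sigma(m),w')$), matching exactly what $\boxr{\sigma(m)}\sigma(C)$ requires. A secondary subtlety is the implicit universal quantification just mentioned: the Harmony, Zig and Zag clauses of Definition~\ref{def:sigma-sim} must be understood as holding at every pair in $Z$ rather than only at the root, otherwise the induction cannot cross a single modal layer.
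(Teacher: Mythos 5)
Your proposal is correct and follows essentially the same route as the paper: structural induction on the modal CNF formula, Harmony plus consistency for the literal base case, and the Zig/Zag clauses (understood at every pair of $Z$) for the modal cases. If anything, your handling of $\boxr{m}C$ is slightly more careful than the paper's: you correctly use Zag in the forward direction to match an \emph{arbitrary} $\N' \in R'(\sigma(m),w')$ back to some $\N \in R(m,w)$, whereas the paper invokes Zig there and silently passes from ``those $\N'$ matched to some $\N$'' to ``all $\N'$''.
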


\begin{proof}
  The proof is by induction on
  $\varphi$. 
  Base Case: Suppose $\varphi = a$ then, $\M \models a$ iff $a \in
  V(w)$ iff $a \in L_{V(w)}$ iff, by definition of
  $\sigma$-simulation, $\sigma(a) \in L_{V'(w')}$ iff $\M' \models
  \sigma(a)$.

  Suppose $\varphi = \neg a$ then, $\M \models \neg a$ iff $a \not\in
  V(w)$ iff $\neg a \in L_{V(w)}$ iff, by definition of
  $\sigma$-simulation, $\sigma(\neg a) = \neg \sigma(a) \in
  L_{V'(w')}$ iff $\sigma(a) \not\in V'(w')$ iff $\M' \models \neg
  \sigma(a)$.
  
  When $\varphi=C$, with $C$ a clause or a conjunction of
  clauses, the proof follows by induction directly.

  Inductive Step: Suppose $\varphi = \boxr{m} \psi$. Then $\M \models
  \boxr{m} \psi$ iff $\N \models \psi$ for all $\N \in R(m,w)$. Given
  that $\M\ssim \M'$, by Zig we know that for all $\N$ exist $\N'$
  such that $\N \ssim \N'$ and $\N' \in R'(\sigma(m),w')$. Then, by
  inductive hypothesis, $\N' \models \sigma(\psi)$ for all $\N' \in
  R'(\sigma(m),w')$ iff $\M' \models \boxr{\sigma(m)}\sigma(\psi)$.
  Then, by Definition \ref{def:permutation}, $\M' \models
  \sigma(\boxr{m}\psi)$.  The converse uses Zag and the inductive hypothesis.

  Suppose $\varphi = \neg \boxr{m} \psi$. Then $\M \models
  \neg \boxr{m} \psi$ iff there exists $\N \in R(m,w)$ such that, $\N \models
  \neg \psi$. Given
  that $\M\ssim \M'$, by Zig we know that for all $\N$ exist $\N'$
  such that $\N \ssim \N'$ and $\N' \in R'(\sigma(m),w')$. Then, by
  inductive hypothesis, $\N' \models \sigma(\neg \psi) = \neg
  \sigma(\psi)$ iff $\M' \models \neg \boxr{\sigma(m)}\sigma(\psi)$.
  Then, by Definition \ref{def:permutation}, $\M' \models
  \sigma(\neg \boxr{m}\psi)$.  The converse follows using Zag and the inductive hypothesis.
%
%
\end{proof}

An easily verifiable consequence of Definitions~\ref{def:sigma-model2}
and~\ref{def:sigma-sim} is that $\M$ and $\sigma(\M)$ are always
$\sigma$-similar.

\begin{proposition} \label{prop:lit1} Let $\sigma$ be a consistent
  permutation and $\mathcal{M} = \langle w,W,V,R \rangle$ a
  model. Then $\mathcal{M} \ssim \sigma (\mathcal{M})$.
\end{proposition}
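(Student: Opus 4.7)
The plan is to exhibit an explicit $\sigma$-simulation between $\M$ and $\sigma(\M)$ and check each of the four conditions of Definition~\ref{def:sigma-sim}. The natural candidate is
\[
Z = \{(\N, \sigma(\N)) \mid \N \in \Ext(\M)\}.
\]
First I would verify that $Z$ has the right type, i.e., $Z \subseteq \Ext(\M) \times \Ext(\sigma(\M))$. The left coordinate is in $\Ext(\M)$ by construction; for the right coordinate, a coinductive argument using Definition~\ref{def:sigma-model2} shows that $\sigma(\N) \in \Ext(\sigma(\M))$ whenever $\N \in \Ext(\M)$, since by construction the $R'$-successors of $\sigma(\N)$ are exactly $\sigma$-images of $R$-successors of $\N$.

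The Root condition is then immediate from $(\M, \sigma(\M)) \in Z$. For Harmony, taking any $(\N, \sigma(\N)) \in Z$ with $\N = \tup{v, W, V, R}$, the key observation is that because $\sigma$ is \emph{consistent} (so $\sigma(\neg l) = \neg \sigma(l)$), the image $\sigma(L_{V(v)})$ is itself a complete and consistent set of literals; hence it coincides with $L_{V'(v)}$, where $V'(v) = \sigma(L_{V(v)}) \cap \Atom$ as in Definition~\ref{def:sigma-model2}. The biconditional $l \in L_{V(v)}$ iff $\sigma(l) \in L_{V'(v)}$ then follows because $\sigma$ is a bijection on $\ALit$. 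For Zig, given $\N' \in R(m, v)$ at a pair $(\N, \sigma(\N)) \in Z$, the witness is $\sigma(\N')$, and $\sigma(\N') \in R'(\sigma(m), v)$ is a direct unfolding of the definition of $R'$. Zag is dual: any $R'$-successor at modality $m$ is of the form $\sigma(\N)$ for some $\N \in R(\sigma^{-1}(m), v)$, and the pair $(\N, \sigma(\N))$ belongs to $Z$.

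The main obstacle is not conceptual but notational. Modalities may be indexed by atoms, as in Definition~\ref{def:permutation} where $\sigma(m(a)) = m(\sigma(a))$, so $\sigma$ acts simultaneously on literals and on modality indices. One must therefore keep careful track of this joint action when verifying Harmony and Zig/Zag, and also confirm that the closure conditions defining the classes of Figure~\ref{fig:classes} are preserved under $\sigma$ (so that $\sigma(\M)$ still lies in the intended class). Once this bookkeeping is settled, the four conditions of Definition~\ref{def:sigma-sim} are essentially tautologies from the way $V'$ and $R'$ were defined in Definition~\ref{def:sigma-model2}.
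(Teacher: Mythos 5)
Your proposal follows essentially the same route as the paper's proof: you exhibit the same relation $Z = \{(\N,\sigma(\N)) \mid \N \in \Ext(\M)\}$, dispatch Zig and Zag by unfolding the definition of $R'$ in Definition~\ref{def:sigma-model2}, and establish Harmony by observing that consistency of $\sigma$ makes $\sigma(L_{V(v)})$ a complete and consistent set of literals that therefore coincides with $L_{V'(v)}$. Your closing remarks about preservation of the defining conditions of Figure~\ref{fig:classes} are not needed for this proposition, which is stated for arbitrary models in $\Mods$, but they are harmless.
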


\begin{proof}
  Let us define the relation $Z=\{ (\N,\sigma(\N)) \mid \N \in
  \Ext(\M) \}$ and show that it is a $\sigma$-simulation between $\M$
  and $\sigma(\M)$.  The \textit{Zig} and \textit{Zag} conditions are
  trivial by definition of $\sigma(\M)$.

  For \textit{Harmony}, we have to check that $l \in L_{V(w)}$
  iff $\sigma (l) \in L_{V'(w)}$. From the definition of $\sigma
  (\M)$, $L_{V'(w)} = \sigma (L_{V(w)})$, hence if $l \in L_{V(w)}$
  then $\sigma(l) \in \sigma(L_{V(w)})$.  Moreover, $\sigma
  (L_{V(w)})$ is a complete set of literals because $L_{V(w)}$ is a
  complete set of literals and $\sigma$ is a consistent permutation,
  and hence the converse also follows.
\end{proof}

Interestingly, if $\sigma$ is a symmetry of $\varphi$ then for any
model $\M$, $\M$ is a model of $\varphi$ if and only if $\sigma(\M)$
is. This will be a direct corollary of the following proposition in
the particular case when $\sigma$ is a symmetry and hence
$\sigma(\varphi) = \varphi$.

\begin{proposition}\label{prop:lit3}
  Let $\sigma$ be a consistent permutation, $\M$ a model and $\varphi$
  a modal CNF formula. Then $\M \models \varphi$ iff $\sigma (\M)
  \models \sigma (\varphi)$.
\end{proposition}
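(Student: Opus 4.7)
The proof should be a nearly immediate combination of the two preceding propositions, so the plan is simply to chain them. First I would invoke Proposition~\ref{prop:lit1}, which guarantees that for any consistent permutation $\sigma$ and any model $\M$, we have $\M \ssim \sigma(\M)$. This gives us a concrete $\sigma$-simulation between $\M$ and the witness model $\sigma(\M)$, which is exactly the hypothesis needed to apply Proposition~\ref{prop:lit2}.

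Next I would apply Proposition~\ref{prop:lit2} with $\M' := \sigma(\M)$. Since $\sigma$ is consistent and $\M \ssim \sigma(\M)$, that proposition gives $\M \models \varphi$ iff $\sigma(\M) \models \sigma(\varphi)$, which is exactly the statement to prove. No further induction on $\varphi$ is required, since the induction was already carried out inside Proposition~\ref{prop:lit2}.

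There is essentially no obstacle: both ingredients are already in place, and the consistency hypothesis on $\sigma$ is precisely what makes Proposition~\ref{prop:lit1} applicable (so that $\sigma$ maps complete literal sets to complete literal sets, preserving the Harmony clause). The only thing worth double-checking in the write-up is that one does not accidentally need a symmetry assumption: the statement is about arbitrary consistent permutations and $\sigma(\varphi)$ on the right-hand side, not $\varphi$, so the symmetry case $\sigma(\varphi) = \varphi$ follows later as a corollary rather than being used here.
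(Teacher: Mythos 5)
Your proposal is correct and matches the paper's proof exactly: the paper likewise derives the statement by combining Proposition~\ref{prop:lit1} (which gives $\M \ssim \sigma(\M)$) with Proposition~\ref{prop:lit2} applied to $\M' = \sigma(\M)$. No differences worth noting.
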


 \begin{proof}
   From Proposition~\ref{prop:lit1} ($\M
   \ssim \sigma(\M)$) and Proposition~\ref{prop:lit2}.
 \end{proof}

\begin{corollary}\label{cor:lit4}
  If $\sigma$ is a symmetry of $\varphi$ then $\M \in \Mod(\varphi)$
  iff $\sigma(\M) \in \Mod(\varphi)$.
\end{corollary}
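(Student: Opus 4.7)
The plan is to invoke Proposition~\ref{prop:lit3} almost directly, since the corollary is essentially a specialization of that proposition to the case where the permutation is a symmetry of the formula. By assumption, $\sigma$ is a symmetry of $\varphi$, which by definition means $\sigma(\varphi) = \varphi$ (regarding conjunctions and disjunctions as sets). So the plan reduces to chaining this identity with Proposition~\ref{prop:lit3} and unfolding the definition of $\Mod(\varphi)$.

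Concretely, I would proceed as follows. First, note that $\M \in \Mod(\varphi)$ means exactly $\M \models \varphi$ by Definition~\ref{def:sem}. Next, apply Proposition~\ref{prop:lit3} to obtain $\M \models \varphi$ iff $\sigma(\M) \models \sigma(\varphi)$. Then, using $\sigma(\varphi) = \varphi$, rewrite the right-hand side to get $\sigma(\M) \models \varphi$, which is the same as $\sigma(\M) \in \Mod(\varphi)$. Chaining these equivalences yields the stated biconditional.

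I do not foresee any real obstacle here. The only mild subtlety is the need for $\sigma$ to be a \emph{consistent} permutation in order to apply Proposition~\ref{prop:lit3}, but this is harmless because symmetries of interest in the paper are assumed to be consistent (and without consistency the permutation of a model in Definition~\ref{def:sigma-model2} would not even be well-behaved with respect to negation). So the proof will be a two-line deduction: cite Proposition~\ref{prop:lit3}, then substitute $\sigma(\varphi) = \varphi$.
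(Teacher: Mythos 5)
Your proposal matches the paper's own derivation exactly: the corollary is stated as a direct consequence of Proposition~\ref{prop:lit3} together with the identity $\sigma(\varphi)=\varphi$ that defines a symmetry, which is precisely the two-step chain you give. Your side remark about consistency is also apt --- the proposition requires $\sigma$ to be consistent, and the paper implicitly assumes this of the symmetries it considers.
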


To clarify the implications of the Corollary~\ref{cor:lit4}, consider
the following example.

\begin{example}
  Let $\varphi = (p \vee q \vee r) \wedge (s \vee q \vee r) \wedge
  (\neg p \vee \neg s) \wedge \tup{m} ( p \vee s ) \wedge [\A]( \neg r
  ) $.
  From Figure~\ref{fig:coroll}a we can verify
  that $\M_{1} \models \varphi$.

  Now $\sigma = (p\ s)(\neg p\ \neg s)$ is a symmetry of $\varphi$.
  Then, by Corollary~\ref{cor:lit4}, we have
  $\sigma(\M_{1}) \models \varphi$, which can be verified in the
  model of Figure~\ref{fig:coroll}b.

\begin{figure}[ht]
  \begin{center}
    \begin{minipage}[ht]{.45\linewidth}
      \begin{center}
        \includegraphics[scale=0.6,keepaspectratio=true]{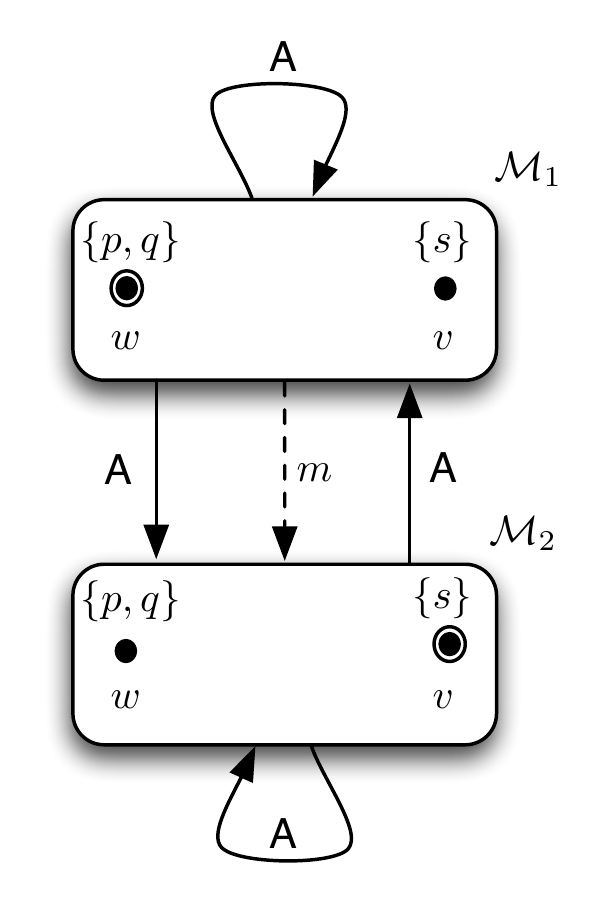}
          
        (a)
      \end{center}
    \end{minipage}
    \begin{minipage}[ht]{.45\linewidth}
      \begin{center}
        \includegraphics[scale=0.6,keepaspectratio=true]{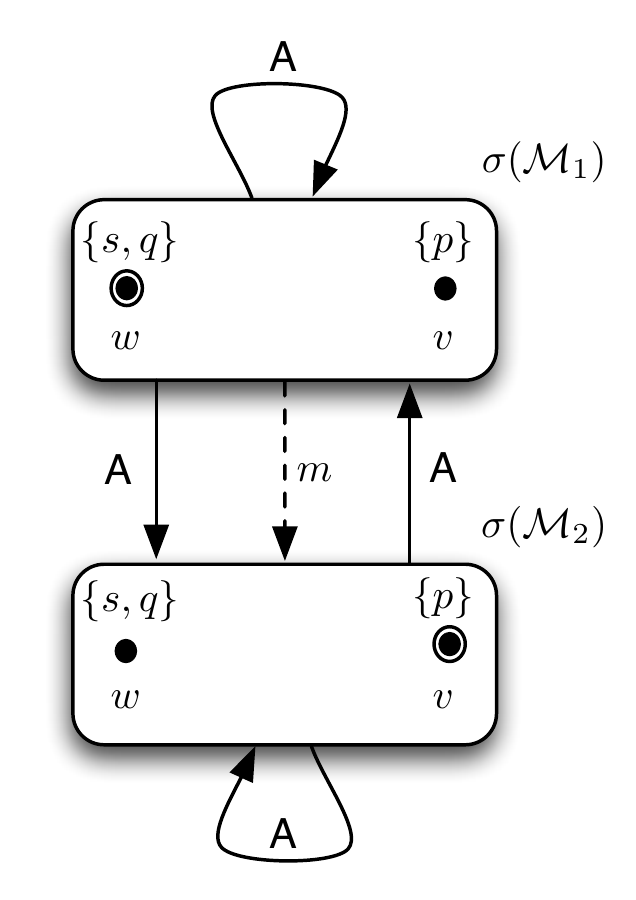}
          
        (b)
      \end{center}
    \end{minipage}
  \end{center}
  \caption{a) Model $\M_{1}$. b) Model $\sigma(\M_{1})$.  \label{fig:coroll}}
\end{figure}

\end{example}

The notion of $\sigma$-simulation in coinductive modal models is
general enough to be applicable to a wide range of modal logics.
Notice though, that our definition of $\sigma$-simulation makes no
assumption about the models being in the same class. Consider, for
example, a model $\M \in \cC_{\Hl(@)}$ and a permutation $\sigma = (i\
p)(\neg i \ \neg p)$ for $i \in \Nom$, $p \in \Prop$. By the defining
condition $\cC_{\Hl(@)}$, nominals in $\M$ are true at a unique
element in the domain, but this does not necessary hold for
$\sigma(\M)$, and hence $\sigma(\M)$ might not be in
$\cC_{\Hl(@)}$. Hence, when working with subclasses of $\Mods$ we will
often have to require additional conditions to a permutation $\sigma$
to ensure that for every $\M$, $\sigma(\M)$ is in the intended class.

\begin{definition}
  Let $\sigma$ be a permutation and $\cC$ a closed class of models. We
  say that $\cC$ is \emph{closed under $\sigma$} if for every $\M \in
  \cC$, $\sigma(\M) \in \cC$.
\end{definition}


\begin{example}[$\sigma$-simulation in hybrid logic]
  Consider the class $\cC_{\Hl(@)}$. $\cC_{\Hl(@)}$ is not closed under arbitrary permutations, but it is
  closed under permutations that send nominals to nominals.
%
%
%
\end{example}

Everything is now in place to show that modal entailment is preserved
under symmetries.

 \begin{theorem}\label{theo:main}
   Let $\varphi$ and $\psi$ be modal formulas, let $\sigma$ be a
   consistent symmetry of $\varphi$ and $\cC$ a class of models closed
   under $\sigma$. Then $\varphi \models_{\cC} \psi$ if and only if
   $\varphi \models_{\cC} \sigma(\psi)$.
 \end{theorem}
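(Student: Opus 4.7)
The plan is to reduce the theorem to Corollary~\ref{cor:lit4} (equivalently, Proposition~\ref{prop:lit3}), which already describes how a consistent permutation interacts with $\models$: moving from $\M$ to $\sigma(\M)$ exactly corresponds to moving from $\varphi$ to $\sigma(\varphi)$. Since $\sigma$ is a symmetry of $\varphi$, the identity $\sigma(\varphi) = \varphi$ means that $\varphi$ is validated by $\M$ iff it is validated by $\sigma(\M)$, and this is essentially all we need in order to shuffle witnesses of $\varphi$ around in $\cC$.

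Before the main argument I would record two preliminary facts that the statement does not mention explicitly. First, because the permutations considered act on the finite set of literals occurring in $\varphi$, $\sigma$ has finite order, so $\sigma^{-1} = \sigma^{k}$ for some $k \geq 0$; hence $\cC$ closed under $\sigma$ is automatically closed under $\sigma^{-1}$. Second, $\sigma(\varphi) = \varphi$ implies $\sigma^{-1}(\varphi) = \varphi$, so $\sigma^{-1}$ is itself a consistent symmetry of $\varphi$, and Corollary~\ref{cor:lit4} applies just as well to $\sigma^{-1}$.

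For the $(\Rightarrow)$ direction, I would take an arbitrary $\M \in \cC$ with $\M \models \varphi$ and aim to prove $\M \models \sigma(\psi)$. The witness we really need to feed into the hypothesis is $\sigma^{-1}(\M)$: by the preliminary facts it lies in $\cC$, and Corollary~\ref{cor:lit4} (applied to the symmetry $\sigma^{-1}$) gives $\sigma^{-1}(\M) \models \varphi$. The assumption $\varphi \models_{\cC} \psi$ then yields $\sigma^{-1}(\M) \models \psi$, and one last application of Proposition~\ref{prop:lit3}, to the model $\sigma^{-1}(\M)$ and the formula $\psi$, produces $\M = \sigma(\sigma^{-1}(\M)) \models \sigma(\psi)$.

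The $(\Leftarrow)$ direction is entirely symmetric: given $\M \in \cC$ with $\M \models \varphi$, closure under $\sigma$ and Corollary~\ref{cor:lit4} yield $\sigma(\M) \in \cC$ with $\sigma(\M) \models \varphi$; the hypothesis $\varphi \models_{\cC} \sigma(\psi)$ gives $\sigma(\M) \models \sigma(\psi)$; and Proposition~\ref{prop:lit3}, read as $\M \models \psi$ iff $\sigma(\M) \models \sigma(\psi)$, delivers $\M \models \psi$. I do not expect any substantial obstacle here, since the heavy lifting was done in Proposition~\ref{prop:lit2}; the only mildly subtle point is the auxiliary closure of $\cC$ under $\sigma^{-1}$, which is where the finite-order assumption on permutations quietly enters.
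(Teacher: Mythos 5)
Your proof is correct and is essentially the paper's own argument: both rest on Proposition~\ref{prop:lit3}/Corollary~\ref{cor:lit4} together with closure of $\cC$ under $\sigma$, and both invoke the finite order of $\sigma$ to handle the inverse direction (the paper uses it to prove the set-level identity $\Mod_{\cC}(\varphi)=\sigma(\Mod_{\cC}(\varphi))$ and then chains equivalences, whereas you argue pointwise by passing to the witness $\sigma^{-1}(\M)$). The only steps you leave implicit --- that iterating the model-level permutation agrees with permuting by $\sigma^{k}$, and that $\sigma(\sigma^{-1}(\M))=\M$ --- are routine verifications of the same kind the paper itself glosses over.
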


\begin{proof}
  We first show that under the hypothesis of the theorem the following
  property holds \smallskip

  \noindent
  \textbf{Claim:} $\Mod_{\cC}(\varphi) = \sigma(\Mod_{\cC}(\varphi))$.
  \smallskip

  \noindent [$\supseteq$] Let $\N \in
  \sigma(\Mod_{\cC}(\varphi))$ and $\M \in \Mod_{\cC}(\varphi)$ be
  such that $\N = \sigma (\M)$.  Then $\M \models \varphi$
  and by Corollary~\ref{cor:lit4}, $\sigma (\M) \models \varphi$. Given
  that $\cC$ is closed under $\sigma$, $\sigma (\M) \in \cC$ and,
  hence, $\sigma (\M) \in \Mod_{\cC}(\varphi)$.  \smallskip

  \noindent [$\subseteq$] Let $\M \in \Mod_{\cC}(\varphi)$, then $\M
  \models \varphi$.   By Corollary~\ref{cor:lit4}, $\sigma (\M) \models
  \varphi$ and, given that $\cC$ is closed under $\sigma$, $\sigma
  (\M) \in \cC$. Therefore, $\sigma (\M) \in
  \Mod_{\cC}(\varphi)$.  Because $\sigma$ is arbitrary, the results holds
  also for $\sigma^k$, $k \in \mathbb{Z}$. 
  
  Because $\sigma$ is a permutation over a finite set, there exists $n$ 
  such that $\sigma^n$ is the identity permutation. 
  Now consider $\sigma^{n-1}(\M)$, we know $\sigma^{n-1}(\M) \in
  \Mod_{\cC}(\varphi)$. Hence $\sigma^n(\M) = \M \in
  \sigma(\Mod_{\cC}(\varphi))$.\medskip

  Now, we have to prove that $\varphi \models_{\cC} \psi$ iff
  $\varphi \models_{\cC} \sigma(\psi)$.  By definition, $\varphi \models_{\cC}
  \psi$ iff $\Mod_{\cC}(\varphi) \models_{\cC} \psi$. By
  Proposition~\ref{prop:lit3}, this is the case if and only if
  $\sigma(\Mod_{\cC}(\varphi)) \models_{\cC} \sigma(\psi)$.

  Given that $\sigma$ is a symmetry of $\varphi$, by the Claim above,
  $\sigma(\Mod_{\cC}(\varphi)) \models_{\cC} \sigma(\psi)$ iff
  $\Mod_{\cC}(\varphi) \models_{\cC} \sigma(\psi)$, which by definition
  means $\varphi \models_{\cC} \sigma(\psi)$.
\end{proof}

Theorem~\ref{theo:main} provides an inexpensive inference mechanism
that can be used in every situation where entailment is involved
during modal automated reasoning.
Indeed, applying a permutation on a formula is a calculation that is
arguably computationally cheaper than a tableau expansion or a resolution
step. Therefore, new formulas obtained by this mean may reduce
the total running time of an inference algorithm.
In the case of propositional logic, the strengthening of the learning
mechanism has already shown its results in~\cite{Benhamou:uv}.
In the case of modal logic, it remains to see when cases of $\varphi \models \psi$
occur during a decision procedure, and how to better take advantage of them. 

\section{Layered Permutations} \label{sec:layering} 

In this section we
present the notion of layered permutations. First, we
present a definition of the tree model property~\cite{MLBOOK}
for coinductive modal models that we will use. 

Given a model $\M$, a (finite) \emph{path rooted at $\M$} is a
sequence $\pi=(\M_0,m_{1},\M_{1},\ldots,m_{k},\M_{k})$, for $m_i \in
\Mod$ where $\M_0 = \M$, $k \ge 0$, and $\M_{i} \in
R(m_{i},w^{\M_{i-1}})$ for $i=1,\ldots,k$. For a path
$\pi=(\M_0,m_{1},\M_{1},\ldots,m_{k},$ $\M_{k})$ we define $\first(\pi) =
\M_0$, $\last(\pi) = \M_k$, and $\length(\pi) = k$.  We denote the set
of all paths rooted at $\M$ as $\Pi[\M]$. A coinductive tree model is a model that has a unique path to every
reachable model (every model in $\Ext(\M)$).  Formally we can define
the class of all coinductive tree models, $\cC_{Tree}$, with the
following defining condition:
$$
\cC_{\Tree}:= P_{\Tree}(\M) \Longleftrightarrow \last:\Pi[\M] \mapsto
\Ext(\M) \mbox{ is bijective.}
$$

For example, the \emph{unravelling} construction (in its version for
coinductive modal models) shown below always defines a model in $\cC_{Tree}$.

\begin{definition}[Model Unravelling]
  Given a model $\M=\tup{w,W,V,R}$, the \emph{unravelling} of $\M$, (notation
  $\T(\M)$), is the rooted coinductive model $\T(\M) = \langle
  (\M),\Pi[\M],V',R'\rangle$ where
$$
\begin{array}{rcll}
  V'(\pi)  & = & V(w^{\last(\pi)}), \mbox{ for all } \pi \in \Pi[\M],\\
  R'(m, \pi)& = & \{ \tup{\pi', \Pi[\M],V',R'} \mid \last(\pi') \in
  R(m,\last(\pi))\},   & \mbox{for }m \in \Mod, \pi \in \Pi[\M].
\end{array}
$$
\end{definition}

It is easy to verify that given a model $\M$, its unravelling $\T(\M)$
is a tree ($\T(\M) \in \cC_{Tree}$) and, as expected, $\M$ and
$\T(\M)$ are bisimilar.

In what follows, we will use trees to define a more flexible family of
symmetries that we call \emph{layered symmetries}.  The following will
give a sufficient condition ensuring that layered symmetries also
preserve entailment.

\begin{definition}[Tree model closure property]\label{def:tree}
  We say that a class $\cC$ of models is \emph{closed under trees} if
  for every model $\M \in \cC$ there is a tree model $\T \in \cC$ such
  that $\M \bisim \T$.
\end{definition}

\noindent
From this definition, it follows that a class of models $\cC$ closed
under unravellings ($\T(\M) \in \cC$ for all $\M \in \cC$) is also
closed under trees.

\begin{example}
  Trivially the class $\Mods$ (i.e., the basic modal logic) is closed
  under trees, and so does the class
  $\cC_{\textsf{KAlt}_1}$ of models where the
  accessibility relation is a partial function.  Many classes like
  $\cC_\A$, $\cC_{@_i}$ and $\cC_\Nom$ fail to be closed under trees.
\end{example}

Logics defined over classes closed under trees have an interesting
property: there is a direct
correlation between the syntactical modal depth of the formula and the
depth in a tree model satisfying it.  In tree models, a notion of
layer is induced by the depth (distance from the root) of the nodes in
the model. Similarly, in modal formulas,  a notion of layer is induced 
by the nesting of the modal operators.  A consequence of this
correspondence is that literals occurring at different formula layers
are semantically independent of each other (see~\cite{arec:tree00} for
further discussion), i.e., at different layers the same literal can be
assigned a different value.

\begin{example}
Consider  the formula $\varphi = (p \vee q) \wedge (r \vee \neg \Box
(\neg p \vee q \vee \Box \neg r))$ and a tree model $\M$ of $\varphi$. Figure \ref{fig:layering} shows the layers induced by the
modal depth of the formula and the corresponding depth in $\M$. 
\begin{figure}[h]
  \begin{center}
    \includegraphics[scale=0.6,keepaspectratio=true]{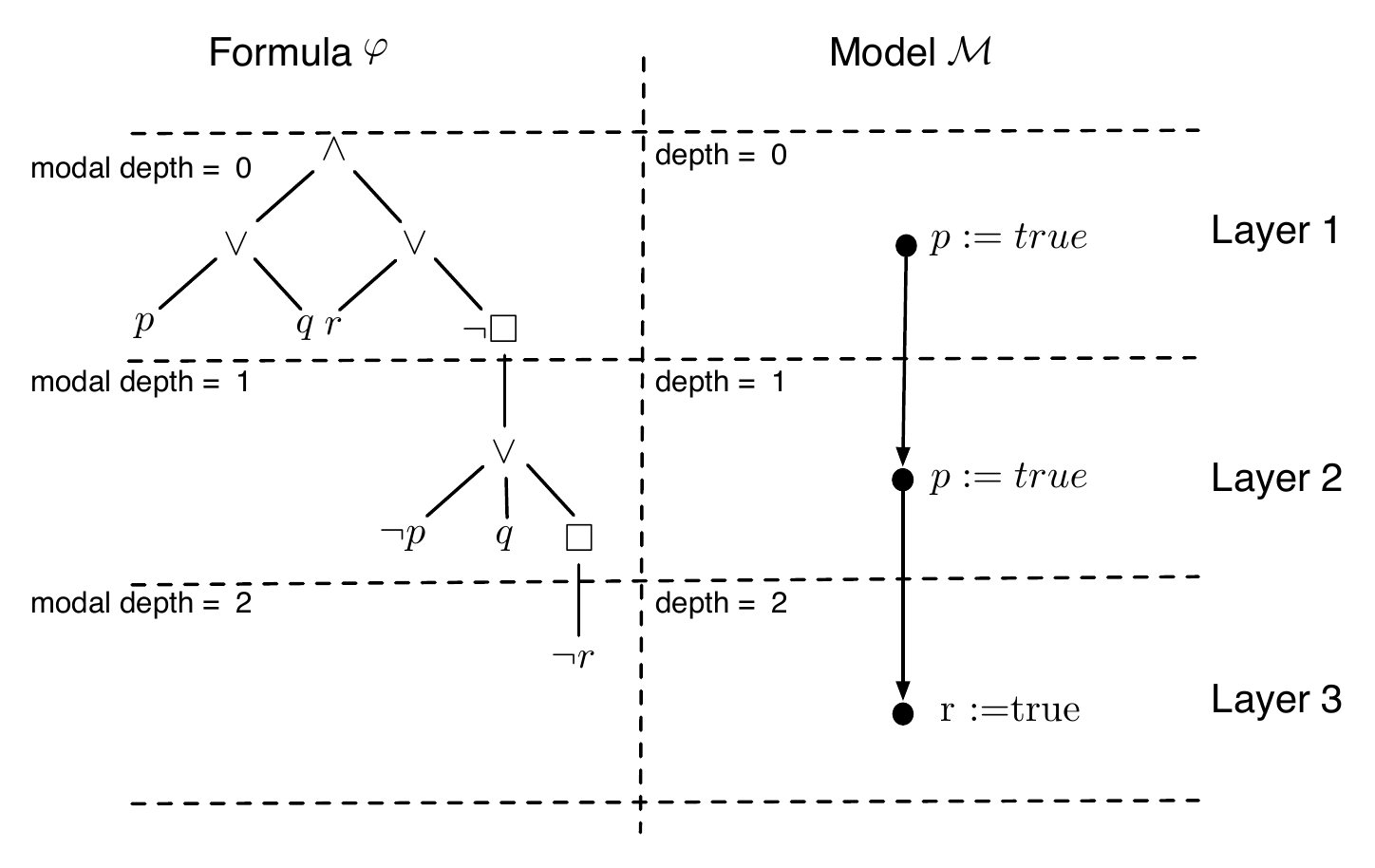}
  \end{center}
  \vspace*{-.3cm}
  \caption{Induced layering on a model and a formula.}\label{fig:layering}
\end{figure}

\end{example}

The independence between literals at different layers enables us to give a more flexible notion of
a permutation that we will call \emph{layered permutation}. 
Key to the notion of layered permutation is that of a \emph{permutation sequence}.

        	
\begin{definition}[Permutation Sequence]
  We define a finite permutation sequence $\bar{\sigma}$ as either
  $\bar{\sigma} = \tup{}$ (i.e., $\bar{\sigma}$ is the empty sequence)
  or $\bar{\sigma} = \sigma:\bar{\sigma}_2$ with $\sigma$ a
  permutation and $\bar{\sigma}_2$ a permutation sequence.
  Alternatively we can use the notation $\bar{\sigma} = \tup{\sigma_1,
    \ldots, \sigma_n}$ instead of $\bar{\sigma} =
  \sigma_1{:}\ldots{:}\sigma_n{:}\tup{}$.

  Let $|\bar{\sigma}|=n$ be the length of $\bar{\sigma}$ ($\tup{}$ has
  length 0).  For $1 \le i \le n$, we write $\bar{\sigma}_i$ for
  the subsequence that starts from the $i^{th}$ element of
  $\bar{\sigma}$. For $i \ge n$, we define $\bar{\sigma}_i =
  \tup{}$. In particular $\bar{\sigma} = \bar{\sigma}_1$. 
  Given a permutation sequence $\sigma_1:\bar{\sigma}_2$ we define
  $head(\sigma_1:\bar{\sigma}_2)=\sigma_1$ and $head(\tup{})=\sigma_{Id}$,
  where $\sigma_{Id}$ is the identity permutation.
  We say that a permutation sequence is \emph{consistent} if all of its 
  permutations are consistent.
\end{definition}

Applying a permutation sequence to a modal CNF formula can be
defined as follows:

\begin{definition}[Layered permutation of a formula]
  Let $\varphi$ be a modal CNF formula and $\bar{\sigma}$ a
  permutation sequence.  We define $\bar{\sigma}(\varphi)$
  recursively:
$$
\begin{array}{rcll}
  \tup{}(\varphi) & = & \varphi & \\ 
  (\sigma_1:\bar{\sigma}_2)(l) & = & \sigma_1(l) & \mbox{ for $l \in \ALit$} \\  
  (\sigma_1:\bar{\sigma}_2)(\boxr{m}C) & = & \boxr{\sigma_1(m)}\bar{\sigma}_2(C)\\
  \bar{\sigma}(C) & = & \csetsc{\bar{\sigma}(A)}{A \in C} & \mbox{ for $C$ a clause or a formula.}
\end{array}
$$
\end{definition}

Notice that layered permutations are well defined even if the modal depth
of the formula is greater than the size of the permutation sequence. 
Layered permutations let us use a different permutation at each modal depth. This
enables symmetries (layered symmetries) to be found, that would not be
found otherwise.
 
\begin{example}
  Consider the formula $\varphi = (p \vee \boxr{m} (p \vee \neg r))
  \wedge (\neg q \vee \boxr{m} (\neg p \vee r))$. If we only consider 
  non-layered symmetries then $\varphi$ has none. However,
  the permutation sequence $\tup{\sigma_1, \sigma_2}$ generated by
  $\sigma_1 = (p \ \neg q)$ and $\sigma_2 = (p \ \neg r)$ is a
  layered symmetry of $\varphi$.
\end{example}

As we can see from the previous example, layered permutations let us map 
the same literal to different targets at each different modal depth.  This 
additional degree of freedom can result in new symmetries for a given formula. 

From now on we can mostly repeat the work we did in the previous
section to arrive to a result similar to Theorem~\ref{theo:main} but
involving permutation sequences, with one caveat: the obvious
extension of the notion of permutated model $\sigma(\M)$ to layered
permutations is ill defined if $\M$ is not a tree.  Hence, we need the
additional requirement that the class $\cC$ of models is closed under
trees for the result to go through.

\begin{definition}[Layered Permutation of a model]\label{def:sigma-model}
  Let $\bar{\sigma}$ be a permutation sequence and $\M = \langle w, W, V, R \rangle$
  a tree model. Then $\bar{\sigma}(\M) = \langle w, W, V',R' \rangle$, where,
$$
\begin{array}{rll}
  V'(v) &=& head(\bar{\sigma}) (L_{V(v)}) \cap \Atom \quad \mbox{for all } v \in W
  \mbox{, and, }\\
  R'(m,v)&=&\{ \bar{\sigma}_{2}(\N) \mid \N \in R(head(\bar{\sigma})(m),v)\} \quad \mbox{for all
  } m \in \Mod \mbox{ and } v \in W.
\end{array}
$$
For $M$ a set of tree models, $\bar{\sigma}(M) = \{\bar{\sigma}(\M) \mid \M \in M\}$.
\end{definition}

We can now extend the notion of $\sigma$-simulation to
permutation sequences.

\begin{definition}[$\bar{\sigma}$-simulation]
  Let $\bar{\sigma}$ be a permutation sequence. A
  \emph{$\bar{\sigma}$-simulation} between models
  $\M=\tup{w,W,V,R}$ and $\M'=\tup{w',W',V',R'}$ is a family of
  relations $Z_{\bar{\sigma}_i} \subseteq \Ext(\M) \times \Ext(\M')$,
  $1 \le i $, that satisfies the following conditions:

  \begin{itemize}
  \item \textbf{Root:} $\M Z_{\bar{\sigma}_1} \M'$.

  \item \textbf{Harmony:} If $ wZ_{\bar{\sigma}_i}w'$ then $l
    \in L_{V(w)}$ iff $head(\bar{\sigma}_i)(l) \in L_{V'(w')}$.

  \item \textbf{Zig:} If $\M Z_{\bar{\sigma}_i}
    \M'$ and $\N \in R(m,w)$ then $\N Z_{\bar{\sigma}_{i+1}} \N'$ for
    some $\N' \in R'(head(\bar{\sigma}_i)(m),w')$.
    
  \item \textbf{Zag:} If $\M Z_{\bar{\sigma}_i}
    \M'$ and $\N' \in R'(m,w')$ then $\N Z_{\bar{\sigma}_{i+1}} \N'$
    for some $\N \in R(head(\bar{\sigma}_i)^{-1}(m),w)$.
  \end{itemize}
 
  We say that two models $\M$ and $\M'$ are $\bar{\sigma}$-similar
  (notation $\M \sssim \M')$, if there is a $\bar{\sigma}$-simulation
  between them.
\end{definition}

An important remark about the previous definition is that it does not make
any assumption about the size of the permutation sequence. In fact, it is
well defined even if the permutation sequence at hand is the empty
sequence. In that case, it just behave as the identity permutation at each
layer, thus the relation defines a bisimulation between the models.

Given a closed class of tree models $\cC$ and $\bar{\sigma}$ a
permutation sequence, we say that $\cC$ is \emph{closed under
  $\bar{\sigma}$} if for every $\M \in \cC$, $\bar{\sigma} (\M) \in
\cC$.

Now we are ready to prove the main result concerning layered
symmetries and entailment.

\begin{theorem}\label{theo:main2}
  Let $\varphi$ and $\psi$ be modal formulas and let $\bar{\sigma}$ be
  a consistent permutation sequence, and
  let $\cC$ be a class of models closed under trees and $\cC \cap \cC_{Tree}$ closed under
  $\bar{\sigma}$.  If $\bar{\sigma}$ is a symmetry of $\varphi$ then
  for any $\psi$ we have that $\varphi \models_\cC \psi$ if and only if
  $\varphi \models_\cC \bar{\sigma}(\psi)$.
\end{theorem}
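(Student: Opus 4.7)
The plan is to mirror the proof of Theorem~\ref{theo:main}, replacing each of Propositions~\ref{prop:lit1},~\ref{prop:lit2},~\ref{prop:lit3} and Corollary~\ref{cor:lit4} by its layered counterpart, and then to use Definition~\ref{def:tree} to reduce the problem from an arbitrary class $\cC$ to $\cC \cap \cC_{\Tree}$, which is where the layered apparatus is well defined. First I would establish the layered analogue of Proposition~\ref{prop:lit2}: if $\M \sssim \M'$ witnessed by a family $\{Z_{\bar{\sigma}_i}\}$, then $\M \models \varphi$ iff $\M' \models \bar{\sigma}(\varphi)$. The proof proceeds by induction on $\varphi$; the atomic cases use the Harmony condition with $\mathit{head}(\bar{\sigma}_1) = \sigma_1$, while the modal cases use Zig and Zag, which deliver a successor pair in $Z_{\bar{\sigma}_{i+1}}$ so that the inductive hypothesis applies with the shifted sequence $\bar{\sigma}_{i+1}$, matching the layered action of $\bar{\sigma}$ inside $\bar{\sigma}(\boxr{m}C)$.

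Next I would prove the layered analogue of Proposition~\ref{prop:lit1}: for any tree model $\M \in \cC_{\Tree}$, $\M \sssim \bar{\sigma}(\M)$. The key point is that in a tree each element of $\Ext(\M)$ is reached by a unique path whose length $k$ induces a well-defined layer; I would define $Z_{\bar{\sigma}_i}$ to pair the model reached by a path of length $i-1$ with its image under Definition~\ref{def:sigma-model}, after which Harmony, Zig, and Zag follow directly from the construction of $\bar{\sigma}(\M)$. This is precisely where the tree hypothesis is indispensable: without it, the layered action of $\bar{\sigma}$ would be ambiguous on a node reachable by paths of different lengths. Composing the previous two results yields the analogue of Proposition~\ref{prop:lit3}, and specialising to the symmetry case $\bar{\sigma}(\varphi)=\varphi$ gives the layered analogue of Corollary~\ref{cor:lit4}: for $\M \in \cC_{\Tree}$, $\M \in \Mod(\varphi)$ iff $\bar{\sigma}(\M) \in \Mod(\varphi)$.

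The main obstacle is that $\cC$ need not consist of tree models, so the layered Corollary cannot be applied to every element of $\cC$ directly. To overcome this I would use the tree-closure hypothesis to reduce to trees: $\varphi \models_\cC \psi$ iff $\varphi \models_{\cC \cap \cC_{\Tree}} \psi$. One direction is immediate from $\cC \cap \cC_{\Tree} \subseteq \cC$; for the other, given $\M \in \cC$ with $\M \models \varphi$, Definition~\ref{def:tree} produces a bisimilar $\T \in \cC \cap \cC_{\Tree}$, and Theorem~\ref{invariance} transfers satisfaction of both $\varphi$ and $\psi$ between $\M$ and $\T$. Thus it suffices to prove the theorem for the class $\cC \cap \cC_{\Tree}$, which by hypothesis is closed under $\bar{\sigma}$.

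Finally, inside $\cC \cap \cC_{\Tree}$ I would repeat the argument of Theorem~\ref{theo:main} verbatim. The Claim $\Mod_{\cC \cap \cC_{\Tree}}(\varphi) = \bar{\sigma}(\Mod_{\cC \cap \cC_{\Tree}}(\varphi))$ follows from the layered Corollary together with the observation that each $\sigma_i$ in $\bar{\sigma}$ acts on the finite set of literals of $\varphi$ and therefore has finite order; taking $k$ to be a common multiple of these orders, $\bar{\sigma}^k$ acts as the identity sequence on tree models, which delivers the $[\subseteq]$ inclusion via $\bar{\sigma}^{k-1}(\M)$ exactly as in the non-layered proof. The equivalence $\varphi \models_{\cC \cap \cC_{\Tree}} \psi$ iff $\varphi \models_{\cC \cap \cC_{\Tree}} \bar{\sigma}(\psi)$ then follows from the Claim together with the layered version of Proposition~\ref{prop:lit3}, and combining with the reduction to trees completes the proof.
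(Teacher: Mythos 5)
Your proposal is correct and follows essentially the same route as the paper: reduce entailment over $\cC$ to entailment over $\cC \cap \cC_{\Tree}$ via tree closure and bisimulation invariance, and then replay the argument of Theorem~\ref{theo:main} inside the tree class using the layered analogues of Propositions~\ref{prop:lit1}--\ref{prop:lit3} and Corollary~\ref{cor:lit4}. The only difference is one of detail: the paper invokes those layered analogues implicitly (``the argument is the same \dots but using permutation sequences''), whereas you sketch their proofs, including the correct observation that the tree hypothesis is what makes $\bar{\sigma}(\M)$ well defined.
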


\begin{proof}
  We first show that under the hypothesis of the theorem the following
  two properties hold. \smallskip

  \noindent
  \textit{Claim 1:} $\Mod_{\cC \cap \cC_{Tree}}(\varphi) =
  \bar{\sigma}(\Mod_{\cC \cap \cC_{Tree}}(\varphi))$.
  
  \smallskip The argument is the same as for the Claim in Theorem
  \ref{theo:main} but using permutation sequences.  \medskip

\noindent
\textit{Claim 2:} $\Mod_{\cC}(\varphi) \models_{\cC} \varphi \mbox{
  iff } \Mod_{\cC \cap \cC_{Tree}}(\varphi) \models_{\cC} \varphi$.

\smallskip The left-to-right direction is trivial by the fact that
$\Mod_{\cC \cap \cC_{Tree}}(\varphi) \subseteq \Mod_{\cC}(\varphi)$.
For the other direction, assume $\Mod_{\cC \cap \cC_{Tree}}(\varphi)
\models_{\cC} \varphi$ and $\Mod_{\cC}(\varphi) \not \models_{\cC}
\varphi$. Then there is $\M \in \Mod_{\cC}(\varphi)$ such that $\M
\not \models_{\cC} \varphi$. But we know that $\M \bisim \T$, and
$\T \in \Mod_{\cC \cap \cC_{Tree}}(\varphi)$. Hence $\T
\models_{\cC} \varphi$ which contradicts our assumption.

\medskip It rests to prove that $\varphi \models_{\cC} \psi$ if
and only if $\varphi \models_{\cC} \bar{\sigma}(\psi)$.  By
definition, $\varphi \models_{\cC} \psi$ if and only if
$\Mod_{\cC}(\varphi) \models_{\cC} \psi$. By Claim 2, this is case if
and only if $\Mod_{\cC \cap \cC_{Tree}}(\varphi) \models_{\cC} \psi$.
By the layered version of Proposition~\ref{prop:lit3}, this is
the case if and only if $\bar{\sigma}(\Mod_{\cC \cap
  \cC_{Tree}}(\varphi)) \models_{\cC} \bar{\sigma}(\psi)$.  Given that
$\bar{\sigma}$ is a symmetry of $\varphi$, by Claim 1,
$\bar{\sigma}(\Mod_{\cC \cap \cC_{Tree}}(\varphi)) \models_{\cC}
\bar{\sigma}(\psi)$ if and only if $\Mod_{\cC \cap
  \cC_{Tree}}(\varphi) \models_{\cC} \bar{\sigma}(\psi)$, which by
Claim 2 is the case if and only if $\Mod_{\cC}(\varphi) \models_{\cC}
\bar{\sigma}(\psi)$ which by definition means that $\varphi
\models_{\cC} \bar{\sigma}(\psi)$.
\end{proof}



\newcommand{\cname}{C_{m,k,i}}
\newcommand{\sext}{\sigma_{ext}}

\section{Symmetry Detection} \label{sec:detection}

Different techniques have been proposed for detecting symmetries of
propositional formulas in clausal form. Some of them, deal directly
with the formula~\cite{Benhamou:1994tr}, while others, reduce the
problem to the problem of finding automorphisms in colored graphs
constructed in such a way that the automorphism group of the graph is
isomorphic to the symmetry group of the formula under
consideration~\cite{Crawford:1992wz,Crawford:1996wa,Aloul:2002ww}.

The availability of efficient tools to detect graph automorphisms
(e.g.,~\cite{McKay:1990vf,Darga:2004us,Junttila:2007vx}) has made the
later approach the most successful one because it is fast and easy to
integrate.  


In this section we present a technique for the detection of symmetries
in modal formulas that extends the construction proposed for
propositional formulas to modal CNF
formulas. We present the graph construction algorithm and prove its
correctness. 

We now introduce some notation and definitions.  In what follows, we
consider modal CNF formulas as set of sets as defined in Section~\ref{sec:defs} and write $\psi \in \varphi$ to express that
$\psi$ is subformula  of $\varphi$. Clauses occurring at modal depth 0 are named \emph{top
  clauses} and clauses occurring in modal literals are named
\emph{modal clauses}.
Let $s:\Mod \times \{0,1\} \mapsto \mathbb{N}\backslash \{0,1\}$ be an
injective function and let $t: Sub(\varphi) \mapsto \mathbb{N}$ be a
partial function defined as:
$$
t(\psi) = \left \{
  \begin{array}{lll}
    1 & \mbox{if } \psi \mbox{ is a top clause } \\
    s(m,0) & \mbox{if } \psi = \boxr{m}C \\
    s(m,1) & \mbox{if } \psi = \neg \boxr{m}C
  \end{array}
\right.
$$

\noindent
The \emph{typing} function $t$ assigns a numeric type to every clause
(top or modal). For modal clauses, the type is based on the modality
and the polarity of the modal literal in which it occurs.
%
%
Let us assign to each clause $C$ occurring in $\varphi$ a unique identifier $\mathit{id}(C)={\tup{m, k, i}}$ where $m$ is the modal depth at which the clause occurs, $k=t(\psi)$ is the type of the clause as returned by the
\emph{typing} function $t$ and $i \in \mathbb{N}$ is different for each clause. To 
simplify notation, in what follows we will assume that each clause $C$ is labeled by 
its unique identifier $\mathit{id}(C)={\tup{m, k, i}}$ and write $C_{m,k,i}$. 

By definition, a symmetry of a formula $\varphi$ is a bijective function that maps
literals to literals. It can naturally be extended to a function $\sext$ that also maps 
each clause $C$ to $\sigma(C)$.  Notice that because
$\sigma$ is a symmetry of $\varphi$ both $C$ and $\sigma(C)$ are clauses in $\varphi$. 
Hence, both $C$ and $\sigma(C)$ will be assigned some identifier by the $\mathit{id}$ function. 
 
%

The following are properties of $\sext$ that are easy to verify. 

\begin{proposition}\label{prop:sext}
  Let $\varphi$ be a modal CNF formula and $\sigma$ a symmetry of
  $\varphi$. Then for the extension of $\sigma$, $\sext$, the
  following holds:
  \begin{enumerate}[label=\emph{\roman*})]
  \item\label{sext:1} $\sext$ is a bijective function.
  \item\label{sext:3} If $\sext(\cname) = C_{m',k',i'}$ then $m=m'$.
  \item\label{sext:4} If $\sext(\cname) = C_{m',k',i'}$ then $k=k'$.
  \item\label{sext:5} If $l \in \cname$ then $\sext(l) \in \sext(\cname)$.
  \item\label{sext:2} $\sext$ is a symmetry of $\varphi$.
    
  \end{enumerate}
\end{proposition}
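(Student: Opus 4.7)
The plan is to verify the five items in turn, most of which follow routinely from Definition~\ref{def:permutation} (how $\sigma$ acts on formulas) together with the defining identity $\sigma(\varphi)=\varphi$ of a symmetry. The extension $\sext$ is simply $\sigma$ viewed at the clause level: $\sext(C)=\sigma(C)=\{\sigma(A)\mid A\in C\}$, with the convention that $\sigma$ acts on modal literals via $\sigma(\boxr{m}C')=\boxr{\sigma(m)}\sigma(C')$.

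For (i), $\sigma$ is a bijection on $\ALit$ and (by structural induction) also on modal literals, so $\sext$ is injective on clauses: if $\sext(C)=\sext(C')$ as sets, applying the elementwise inverse recovers $C=C'$. Surjectivity of $\sext$ on the clauses of $\varphi$ is exactly $\sigma(\varphi)=\varphi$. For (ii), Definition~\ref{def:permutation} makes $\sigma$ commute with every $\boxr{m}$-constructor, so by induction on the modal depth of the context in which $C$ appears, $\sext(C)$ sits at the same modal depth as $C$. Item (iv) is immediate from the pointwise definition of $\sext(C)$. For (v), since $\sext$ is just $\sigma$ restricted to the clause-level view and $\sigma(\varphi)=\varphi$, the same identity is inherited by $\sext$.

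The item that demands most care is (iii). Top clauses map to top clauses by (ii) (both at depth $0$, both receiving type $1$). For a modal clause $C$ embedded in a literal $\boxr{m}C$ or $\neg\boxr{m}C$, applying $\sigma$ to the enclosing literal yields $\boxr{\sigma(m)}\sigma(C)$ or $\neg\boxr{\sigma(m)}\sigma(C)$; polarity is plainly preserved, and the modality becomes $\sigma(m)$. The subtle point is that $s$ is injective on $\Mod\times\{0,1\}$, so $s(m,p)=s(\sigma(m),p)$ forces $\sigma(m)=m$. The expected way to close this gap is to observe that the symmetries relevant to the graph-construction algorithm act on literals only and fix every modality occurring in $\varphi$ (by Definition~\ref{def:permutation}, $\sigma(m)=m$ whenever $m$ is not indexed by an atom, and for indexed modalities one should additionally require $\sigma$ to fix the indexing atom, or equivalently to preserve the set of modalities of $\varphi$ pointwise). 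Under this assumption (iii) is immediate; otherwise a coarser typing function (one invariant under the modality-permutation induced by $\sigma$) would be needed. This is the only place where I anticipate having to be explicit about hypotheses; the other four items are one-line verifications once the definitions are unfolded.
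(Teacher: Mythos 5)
The paper gives no proof of this proposition at all --- it is introduced with the phrase ``easy to verify'' and left at that --- so the only comparison available is against the routine unfolding of Definition~\ref{def:permutation} that the authors evidently have in mind. Your treatment of items (i), (ii), (iv) and (v) is exactly that unfolding and is correct, modulo one imprecision you share with the paper: $\sext$ acts on clause \emph{occurrences} (the identifiers $\cname$ distinguish syntactically equal clauses such as the two copies of $\boxr{m}c$ in the paper's example), so strictly speaking one must choose a bijection between occurrences compatible with $\sigma$, not just note that $\sigma(C)$ is again ``a clause of $\varphi$''. This does not affect the substance of your argument.

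Your caveat on item (iii), the preservation of the type $k$, is not a defect of your proof but a correct diagnosis of a genuine imprecision in the paper. As literally defined, $s$ is injective on $\Mod\times\{0,1\}$ and $\Mod$ contains every $@_i$ separately, so for $\varphi=\{\{\boxr{@_i}\{p\},\boxr{@_j}\{p\}\}\}$ the consistent symmetry $(i\ j)(\neg i\ \neg j)$ sends a clause of type $s(@_i,0)$ to one of type $s(@_j,0)$, and the claim $k=k'$ fails. Moreover, if the colour really did pin down the index of an indexed modality, the $E_2$ edges of Definition~\ref{def:graph} would be redundant (a colour-preserving automorphism would already be forced to fix every indexing literal), which strongly suggests the authors intend your ``coarser typing'': $s$ should depend only on the modality constructor and the polarity, with the index tracked by $E_2$. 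Either of your proposed repairs --- restricting attention to symmetries that fix the modalities of $\varphi$ pointwise, or coarsening $t$ --- makes the proposition true; for non-indexed modalities $\sigma(m)=m$ by definition and no hypothesis is needed. Being explicit about this is a genuine improvement over the paper.
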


\newcommand{\Gv}{G(\varphi)}

We can now introduce the construction of the colored graph corresponding
to a given formula $\varphi$. We will construct an undirected colored graph
with two types of edges. 
As coloring function we will use the typing
function $t$ introduced earlier in this section.


\begin{definition}\label{def:graph}
  Let $\varphi$ be a modal CNF formula and let $At(\varphi)$ denote
  the set of atoms occurring in $\varphi$. The colored graph
  $\Gv=(V,E_{1},E_{2})$ is constructed as follows:

  \begin{enumerate}
  \item For each atom $a \in At(\varphi)$:

    \begin{enumerate}
    \item Add two \emph{literal} nodes of color 0: one labelled $a$
      and one labelled $\neg a$.

    \item Add an edge to $E_{1}$ between these two nodes to ensure Boolean
      consistency.
    \end{enumerate}

   \item For each top clause $C$ of $\varphi$ add a \emph{clause} node of
     color $t(C)$. 

   \item \label{cons:point2} For each atom literal occurring in $C$,
     add an edge to $E_{1}$ from $C$ to the corresponding literal node.

  \item For each modal literal $\boxr{m}C'$ ($\neg \boxr{m}C'$) occurring in $C$:
    \begin{enumerate}
    \item Add a clause node of color $t(\boxr{m}C')$
      ($t(\neg\boxr{m}C')$) to represent the modal clause $C'$.

    \item Add an edge to $E_{1}$ from the node of $C$ to this node.
    \item If $m$ is indexed by an atom literal $l$ then add an edge to $E_{2}$ 
      from the node $C'$ to the indexing literal $l$.

    \item Repeat the process from point \ref{cons:point2} for each
      literal (atom or modal) occurring in $C'$.
    \end{enumerate}

  \end{enumerate}

\end{definition}

This construction creates a graph with $2 + 2|\Mod|$ colours and at
most $(2|V| +\#(\mbox{TopClauses}) + \#(\mbox{ModalClauses}))$ nodes.

\begin{example}
  Let us consider the following  formula $\varphi = (a \vee \boxr{m}(b
  \vee \neg \boxr{m}c)) \wedge (b \vee \boxr{m}(a  \vee \neg \boxr{m}c))$.
  This formula has six clauses (2 at modal depth 0, 2 at modal depth 1
  and 2 at modal depth 2) and three atoms (six literals).  The
  associated colored graph, $\Gv$, is shown in Figure~\ref{fig:graph1}
  (colors are represented by shapes in the figure).

  \begin{figure}[ht]
    \begin{center}
      \begin{tabular}{ll}
        \begin{picture}(200,0)
          \put(0,-50){\includegraphics[scale=0.5,keepaspectratio=true]{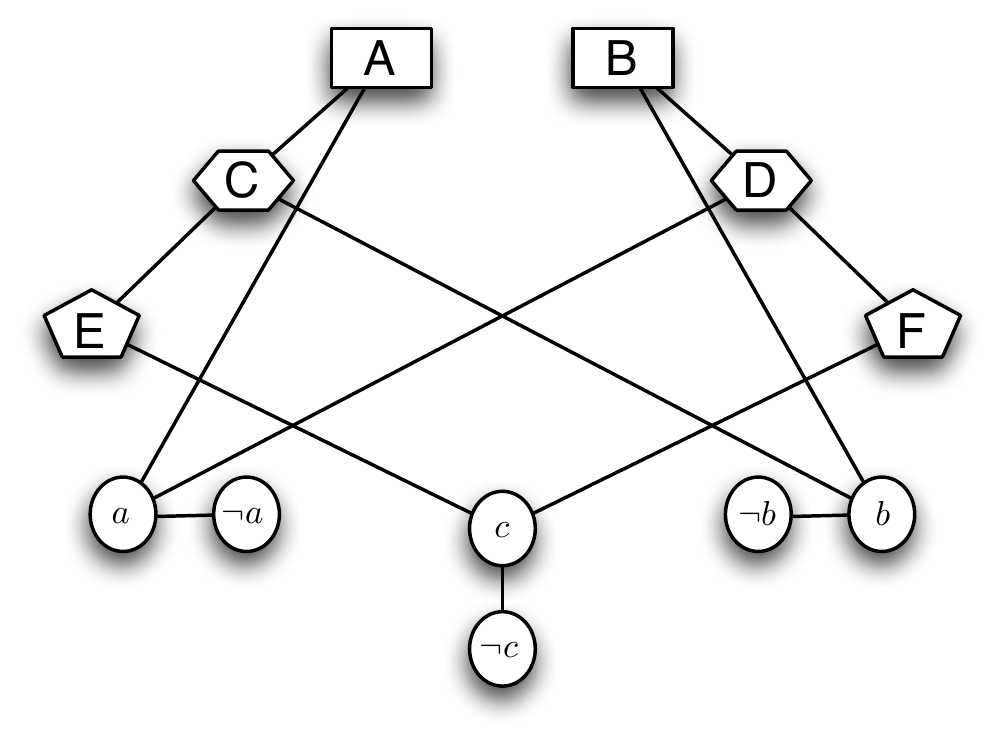}}
        \end{picture}
        &

        $\begin{array}{l}
          A =  (a \vee \boxr{m}(b \vee \neg \boxr{m}c))\\
          B =  (b \vee \boxr{m}(a \vee \neg \boxr{m}c))\\
          C = \boxr{m}(b \vee \neg \boxr{m}c)\\
          D = \boxr{m}(a \vee \neg \boxr{m}c) \\
          E = \boxr{m}c \\
          F = \boxr{m}c\\
        \end{array}$
      \end{tabular}
    \end{center}
    \caption{Graph representation of $\varphi$.}\label{fig:graph1}
  \end{figure}

\end{example}

Note that the construction of Definition~\ref{def:graph} induces a
mapping $g$ that associates to each literal and clause the
corresponding node in the graph.  

To prove that the proposed construction is correct, we first have to
show that each symmetry of the formula is a colored automorphism of
the graph.

\begin{proposition}\label{prop:sym2graph}
  Let $\varphi$ be a modal CNF formula, $\sigma$ a symmetry of
  $\varphi$, $\Gv=(V,E_1,E_2)$ the colored graph of $\varphi$ as
  defined by Definition \ref{def:graph}, and $g$ the mapping induced
  by the construction of $\Gv$. Then $\pi= g \circ \sext$ is an
  automorphism of $\Gv$.
\end{proposition}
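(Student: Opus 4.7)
The plan is to check directly, case by case, the three requirements for $\pi$ to be a colored automorphism of $\Gv = (V,E_1,E_2)$: that $\pi$ is a bijection on $V$, that it preserves the coloring, and that it preserves both edge sets (and their complements). Almost every piece reduces to one of the clauses of Proposition~\ref{prop:sext}, so most of the work lies in carefully tracing through the four kinds of edges introduced by Definition~\ref{def:graph}.

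First, $\pi$ is well-defined on $V$ because the construction of $\Gv$ ensures that $g$ is a bijection between $V$ and the literals and clauses occurring in $\varphi$, while $\sext$ sends literals to literals and clauses to clauses of $\varphi$; bijectivity of $\pi$ then follows from bijectivity of $g$ and of $\sext$ (Proposition~\ref{prop:sext}~\ref{sext:1}). For colors, literal nodes all have color $0$ and are mapped to literal nodes, while for clause nodes the color $t(C)$ is preserved because Proposition~\ref{prop:sext}~\ref{sext:4} guarantees that $\sext$ does not change the type component of the identifier of a clause.

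Edge preservation splits into four cases, one for each rule in Definition~\ref{def:graph}. The consistency edges of $E_1$, which link $a$ and $\neg a$, are preserved because $\sigma$ is consistent, so $\sigma$ maps this pair to $\{\sigma(a), \neg\sigma(a)\}$, another consistency pair. The clause-literal edges of $E_1$, connecting a clause node $C$ with a literal $l$ occurring in $C$, are preserved by Proposition~\ref{prop:sext}~\ref{sext:5}, which states that $l \in C$ implies $\sext(l) \in \sext(C)$. The clause-clause edges of $E_1$, which connect a clause $C$ with a modal subclause $C'$ coming from a modal literal $\boxr{m}C'$ (or $\neg\boxr{m}C'$) in $C$, are preserved because, by Definition~\ref{def:permutation}, $\sigma(\boxr{m}C') = \boxr{\sigma(m)}\sigma(C')$, and since $\sigma$ is a symmetry of $\varphi$, this modal literal lies in $\sigma(C) = \sext(C)$, so $\sigma(C') = \sext(C')$ is a modal subclause of $\sext(C)$ and the construction adds the corresponding edge. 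Finally, the indexing edges of $E_2$, linking a modal subclause $C'$ to the literal $l$ that indexes its outer modality $m(l)$, are preserved because $\sigma(m(l)) = m(\sigma(l))$ by Definition~\ref{def:permutation}, so the outer modality of the image is indexed by $\sext(l)$, producing exactly the $E_2$-edge between $g(\sext(C'))$ and $g(\sext(l))$.

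Preservation of non-edges follows symmetrically by applying the argument to $\sigma^{-1}$, which is also a symmetry of $\varphi$ (the symmetries of $\varphi$ form a group), so $\pi^{-1}$ preserves edges as well. I expect the main obstacle to be purely bookkeeping in the modal-nesting case: one must be careful that the typing function $t$ and the identifier convention $C_{m,k,i}$ interact correctly with $\sext$ (so that a clause and its image really land on nodes that the construction does connect), and that the treatment of $E_2$ edges correctly reflects the fact that permuting a modality indexed by $a$ means permuting $a$ itself. Once these definitional checks are dispatched using Proposition~\ref{prop:sext} and Definition~\ref{def:permutation}, the verification is essentially mechanical.
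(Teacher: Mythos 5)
Your proof is correct and follows essentially the same route as the paper's: a case-by-case verification that $\pi$ preserves colors and each of the edge types of $\Gv$, reducing each check to a clause of Proposition~\ref{prop:sext} or to Definition~\ref{def:permutation} (with the converse direction handled symmetrically). The only caveat, shared with the paper's own proof, is that preservation of the Boolean consistency edges really requires $\sigma$ to be a \emph{consistent} permutation, an assumption both you and the paper invoke beyond the bare hypothesis that $\sigma$ is a symmetry.
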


\begin{proof}
To simplify notation lets assume that $g$ is
the identity function (i.e., we do not differentiate between a clause
(or literal) and its associated node in the graph) and as a consequence $\pi = \sext$.
Then $\pi$ is an automorphism of  $\Gv$ if the following holds:

\begin{enumerate}

\item \label{aut:1}$(l,\neg l) \in E_{1}$
  iff $(\pi(l),\pi(\neg l)) \in E_{1}$ for all $l \in V$.

  We have to consider the following cases:
  \begin{itemize}
    \setlength{\itemsep}{5pt}
  \item[-] $\sigma$ is a permutational symmetry: \\
    \underline{($\rightarrow$):} 
    Assume $\pi=\sext=(a \ b)(\neg a\ \neg
    b)$. Then $(\pi(a),\pi(\neg a)) = (b,\neg b) \in E_1$
    by construction of $\Gv$.
      
    \underline{($\leftarrow$):} $(a,\neg a) \in E_1$ by construction of $\Gv$.  

  \item[-] $\sigma$ is a phase-shift symmetry:\\
    \underline{($\rightarrow$):} Assume $\pi=\sext=(a \ \neg
    a)$. Then $(\pi(a),\pi(\neg a)) = (\neg a, a)$, but given
    that $\Gv$ is an undirected graph $(\neg a, a) = (a,\neg a) \in
    E_{1}$ by construction.
    
    \underline{($\leftarrow$):} $(a,\neg a) \in E_{1}$ by construction of $\Gv$.

  \item[-] $\sigma$ is a compositional symmetry: It follows
    directly from the previous two cases.
 \end{itemize}

\item \label{aut:2}$(l, \cname) \in E_{1}$ iff $(\pi(l),
  \pi(\cname)) \in E_{1}$ for all $l, \cname \in V$. 

\underline{($\rightarrow$):} By construction, $(l, \cname) \in E_{1}$ only
if $l \in \cname$. Then, given that $\sext$ is a symmetry and
by Proposition \ref{prop:sext}\ref{sext:5}, we 
know that $\sext(l)$ and $\sext(\cname)$ both occur in $\varphi$ and
$\sext(l) \in \sext(\cname)$. Then, by construction, we have
that $(\pi(l), \pi(\cname)) \in E_{1}$.

\underline{($\leftarrow$):} It follows directly by construction of
$\Gv$.

\item \label{aut:3}$(C_{m,k,i},C_{m',k',i'}) \in E_{1}$ iff
  $(\pi(C_{m,k,i}),\pi(C_{m',k',i'})) \in
  E_{1}$ for all $C_{m,k,i},C_{m',k',i'} \in V$.

  \underline{($\rightarrow$):} If $(C_{m,k,i},C_{m',k',i'}) \in E_{1}$ we
  know that either $m < m'$ or $m > m'$. Assume $m < m'$ then
  $C_{m',k',i'}$ is a modal clause occurring in $C_{m,k,i}$. By
  Proposition \ref{prop:sext}\ref{sext:5}, we have that
  $\sext(C_{m',k',i'})$ is a modal clause occurring in
  $\sext(C_{m,k,i})$, and given that $\sext$ is a symmetry of
  $\varphi$, $\sext(C_{m,k,i})$ and $\sext(C_{m',k',i'})$ both occur
  in $\varphi$, therefore, by construction, $\pi(C_{m,k,i}) \in V$ and,
  $\pi(C_{m',k',i'}) \in V$, therefore $(\pi(C_{m,k,i}),\pi(C_{m',k',i'})) \in E_{1}$.

\underline{($\leftarrow$):} It follows directly by construction of
$\Gv$.

\item \label{aut:4}$(l,\cname) \in E_{2}$ iff
  $(\pi(l),\pi(\cname)) \in E_{2}$ for all $l, \cname \in V $.

  \underline{($\rightarrow$):} $(l,\cname) \in E_{2}$ if the modality
  of the modal clause $\cname$ is indexed by $l$. Given that $\sext$
  is a symmetry of $\varphi$, we know that $\sext(l) \in
  \varphi$ and
  $\sext(\cname) \in \varphi$ and that $\sext(l)$ index the
  modality of the modal clause $\sext(\cname)$, therefore, by
  construction, $\pi(l) \in V$ and $\pi(\cname) \in V$ and
  $(\pi(l),\pi(\cname)) \in E_{2}$.

\underline{($\leftarrow$):} It follows directly by construction of
$\Gv$.

\item \label{aut:5} For every cycle $(x \
  y) \in \pi$, $x$ and $y$ have the same color.

Follows from Proposition~\ref{prop:sext}\ref{sext:4} and the fact that
by construction different types of clauses are assigned different
colors in the graph.
\end{enumerate}
\end{proof}

We now prove that any colored automorphism of $\Gv$ induces a symmetry of $\varphi$.

\begin{proposition}\label{prop:graph2sym}
  Let $\varphi$ be a modal CNF formula,  $\Gv=(V,E_1,E_2)$ the colored graph of $\varphi$ as defined
  by Definition \ref{def:graph}, $\pi$ an automorphism of
  $\Gv$ and $g$ the mapping induced by the construction of
  $\Gv$. Then $\sext= g^{-1} \circ \pi$ is a symmetry of
  $\varphi$.
\end{proposition}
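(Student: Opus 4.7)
The plan is to prove this essentially as the converse of Proposition~\ref{prop:sym2graph}, inverting each of its five bullet points. Following the convention used in that proof, I would identify literals and clauses with their corresponding nodes in $\Gv$, so that $\sext$ is just $\pi$ acting on those nodes. The starting observation is that $\pi$ is color-preserving, hence sends literal nodes (color $0$) to literal nodes, top-clause nodes (color $1$) to top-clause nodes, and modal-clause nodes of color $s(m,p)$ to modal-clause nodes of the same color; injectivity of $s$ then guarantees preservation of both the modality $m$ and the polarity $p \in \{0,1\}$ of the surrounding modal literal.

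Next I would verify that the restriction $\sigma$ of $\pi$ to literal nodes is a consistent permutation of $\ALit$. Bijectivity is immediate from the bijectivity of $\pi$ combined with color preservation. For consistency, I would argue that the only $E_1$ edges connecting two color-$0$ nodes are the pairs $(a,\neg a)$ added in step (1b) of Definition~\ref{def:graph}, so the $\pi$-image of any such pair must again be a pair of complementary literals; therefore $\sigma(\neg a) = \neg \sigma(a)$.

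The core step is then to prove $\sigma(\varphi) = \varphi$ by induction on modal depth, where the inductive claim is that for every clause $C$ appearing in $\varphi$ the clause $\sigma(C)$ computed via Definition~\ref{def:permutation} agrees with the syntactic clause associated to the node $\pi(C)$ (i.e., with $\sext(C)$). For a fixed $C$: its atom literals correspond to $E_1$ edges from $C$ to color-$0$ nodes, which $\pi$ maps bijectively to the corresponding set of edges around $\pi(C)$, yielding exactly $\{\sigma(l) \mid l \in C \cap \ALit\}$. Its modal literals $\boxr{m}C'$ (respectively $\neg\boxr{m}C'$) correspond to $E_1$ edges to modal-clause nodes of color $s(m,0)$ (respectively $s(m,1)$); color preservation keeps the base modality and polarity; the $E_2$ edges ensure that if $m$ is indexed by an atom $a$ then the image modality is indexed by $\sigma(a)$, matching the convention $\sigma(m(a)) = m(\sigma(a))$ of Definition~\ref{def:permutation}; and the inductive hypothesis handles the inner clause $C'$. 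Since $\pi$ bijects the color-$1$ nodes (top clauses) among themselves, the claim lifts from individual clauses to $\sigma(\varphi) = \varphi$ as sets of sets.

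The main obstacle I anticipate is keeping the bookkeeping tight: $E_1$ encodes three distinct pieces of information (boolean consistency between $a$ and $\neg a$, membership of a literal in a clause, and nesting of a modal clause inside an outer clause), and one must systematically use the colors of endpoints to separate these three cases in order to conclude that $\pi$-preservation of $E_1$ gives exactly the conditions needed for $\sigma$ to be a symmetry and imposes no spurious additional constraints. Once this disambiguation is in place, the argument reduces to a symmetric mirror of each bullet in the proof of Proposition~\ref{prop:sym2graph}, with $E_2$ contributing only the indexed-modality clause that preserves the modal-depth inductive step.
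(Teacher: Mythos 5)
Your proposal is correct and follows essentially the same route as the paper's proof: it uses color preservation of $\pi$ to get consistency of $\sigma$ from the Boolean-consistency edges, type/polarity preservation from the injectivity of the coloring, modal-depth preservation by induction, and the $E_2$ edges for indexed modalities. The only difference is organizational --- you package the paper's six separately enumerated properties into a single top-down structural induction showing that the clause attached to $\pi(C)$ is $\sigma(C)$ --- and the bookkeeping concern you flag (disambiguating the three roles of $E_1$ by endpoint colors and by the already-processed parent) is exactly the point the paper's enumeration is designed to handle.
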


\begin{proof}
  Once more, assume $g$ is the identity. To prove that $\sext$ is a symmetry of $\varphi$, we have to prove
  the following properties:

  \begin{enumerate}
  \item $\sext$ is a consistent permutation, i.e., $\sext(\neg l)
    = \neg \sext(l)$ for all $l \in \ALit$.

By construction, Boolean consistency edges only connect
literal nodes. Let $l_i \in V$ be a literal node. Then by construction
we have that $(l_i, \neg l_i) \in E_1$. Now assume that $\pi(l_i)=l_j$
for $l_j \in V$. Given
that $\pi$ is an automorphism it must be the case that
$(\pi(l_i), \pi(\neg l_i)) \in E_1$, and therefore that
$\pi(\neg l_i)=\neg l_j = \neg \pi(l_i)$, which implies that
$\sext(\neg l_i)=\neg \sext(l_i)$.

\item If $\cname \in \varphi$ then $\sext(\cname) \in \varphi$.

  By construction $\cname \in V$ implies that $\cname \in \varphi$. As
  $\pi$ is an automorphism, $\pi(\cname) \in V$, therefore,
  $\pi(\cname) \in \varphi$ which implies that $\sext(\cname) \in \varphi$.

\item If $l \in \varphi$ then $\sext(l) \in \varphi$.

It follows by the same argument as in the previous case.

\item If $\sext(\cname)=C_{m',k',i'}$ then $k=k'$.

It follows from the fact that $\pi$ is a colored automorphism,
mapping only nodes of the same color, and that by construction, clauses of
the same type are assigned the same color in the
graph.

\item If $\sext(\cname)=C_{m',k',i'}$ then $m=m'$.

We prove this by induction on $m$, the modal depth at which a
clause occurs in $\varphi$.

\underline{Base Case: $m=0$.}
We have to prove that if $\sext(C_{0,k,i})=C_{m',k',i'}$ then $m'=0$.
Assume that $m'\neq 0$. Then, there is a clause $C_{n,s,j}$, with $n<m'$ such
that, $C_{m',k',i'}$ is a modal clause occurring in it. By
construction, we then have that $(C_{n,s,j},C_{m',k',i'}) \in E_{1}$. As
$\pi$ is an automorphism of $\Gv$, we should have
$(\pi(C_{n,s,j}),\pi(C_{m',k',i'})) = (\pi(C_{n,s,j}),C_{0,k,i}) \in
E_{1}$, but by construction there is no such edge.

\underline{Inductive Step: $n<m \implies m$.}
By construction of $\Gv$ if $(C_{m,k,i},C_{n,l,j}) \in E_{1}$ then $|m - n| = 1$.
Now, assume $m\neq m'$. We know that there is a clause
$C_{(m'-1),s,j}$ such that $(C_{(m'-1),s,j},C_{m',k',i'}) \in
E_{1}$.
Then, as $\pi$ is an automorphism of $\Gv$, it must be the case that
$(\pi(C_{(m'-1),s,j}),\pi(C_{m',k',i'}))=(\pi(C_{(m'-1),s,j}),C_{m,k,i})
\in E_{1}$. By the inductive hypothesis we know that
$\pi(C_{(m'-1),s,j}) = C_{(m'-1),s,j'}$ and therefore, we have that
$(C_{(m'-1),s,j'},C_{m,k,i}) \in E_{1}$. But then we get that $|m -
(m'-1)| \geq 2$, which by construction cannot happen. Therefore
 $(C_{(m'-1),s,j'},C_{m,k,i}) \not \in E_{1}$,
contradicting the fact that $\pi$ is an automorphism of $\Gv$.

\item If $l$ index a clause $\cname$ then $\sext(l)$ index the
  clause $\sext(\cname)$.

If $l$ index a clause $\cname$, then by construction $(l,\cname) \in
E_2$. Given that $\pi$ is an automorphism, $(\pi(l),\pi(\cname)) \in
E_2$, which implies that $\sext(l)$ index the
  clause $\sext(\cname)$.
  \end{enumerate}
We have proved that $\sext$, the extension of $\sigma$, obtained
from an automorphism of the graph, is a symmetry of $\varphi$. To obtain
the original symmetry $\sigma$ we just take the restriction of $\sext$ to
atom literals.
\end{proof}

Finally we can prove that our construction is correct. 
\begin{theorem}
  Let $\varphi$ be a modal CNF formula and $\Gv= (V,E_{1},E_{2})$ the
  colored graph constructed following the construction of Definition
  \ref{def:graph}. Then every symmetry $\sigma$ of $\varphi$
  corresponds one-to-one to an automorphism $\pi$ of
  $\Gv$.
\end{theorem}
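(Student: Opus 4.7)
The plan is to derive this theorem as a direct consequence of Propositions~\ref{prop:sym2graph} and~\ref{prop:graph2sym}, which already establish the two directions of the correspondence. Concretely, I will define two maps, $\Phi: \sigma \mapsto g \circ \sext$ from symmetries of $\varphi$ to automorphisms of $\Gv$, and $\Psi: \pi \mapsto (g^{-1} \circ \pi)|_{\ALit}$ from automorphisms of $\Gv$ to symmetries of $\varphi$. Proposition~\ref{prop:sym2graph} shows that $\Phi$ is well defined, and Proposition~\ref{prop:graph2sym} shows the same for $\Psi$ (after restricting to the atom-literal nodes, as noted at the end of its proof). What remains is to verify that $\Phi$ and $\Psi$ are mutually inverse, which gives the claimed one-to-one correspondence.

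First I would argue that $\Psi \circ \Phi = \mathrm{id}$ on the set of symmetries of $\varphi$. Starting from a symmetry $\sigma$, the extension $\sext$ acts on both literals and clauses of $\varphi$ in the canonical way, and by construction $g$ bijects literals and clauses of $\varphi$ with the corresponding nodes of $\Gv$. Hence $g^{-1} \circ (g \circ \sext) = \sext$, and restricting back to $\ALit$ recovers the original $\sigma$. Then I would argue that $\Phi \circ \Psi = \mathrm{id}$ on automorphisms of $\Gv$: given an automorphism $\pi$, the symmetry $\sigma = (g^{-1}\circ \pi)|_{\ALit}$ has an extension $\sext$ which, on every node of $\Gv$ (literal or clause), coincides with $g^{-1} \circ \pi$, by the very definition of $\sext$ together with the properties of $\pi$ established in the proof of Proposition~\ref{prop:graph2sym} (preservation of clause type, modal depth and incidence). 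Composing once more with $g$ returns $\pi$.

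The only mildly delicate point is that $\Phi$ and $\Psi$ must act on the right domains: a symmetry $\sigma$ of $\varphi$ is formally a bijection on $\ALit$, but when lifted to $\sext$ it also permutes clauses, and $g$ must be regarded as a bijection between the combined set of literals and clauses of $\varphi$ on one hand, and $V$ on the other. I would explicitly invoke Proposition~\ref{prop:sext}\ref{sext:1} here to state that $\sext$ is a bijection on this combined set, which is precisely what is needed for $g \circ \sext \circ g^{-1}$ to be a bijection on $V$. I do not anticipate any genuine obstacle; the main care to be taken is bookkeeping about the domains of $\sigma$, $\sext$ and $g$, so that the inverse identities $\Psi \circ \Phi = \mathrm{id}$ and $\Phi \circ \Psi = \mathrm{id}$ can be stated cleanly and the one-to-one conclusion of the theorem follows immediately.
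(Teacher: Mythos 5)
Your proposal is correct and follows essentially the same route as the paper, which derives the theorem directly from Propositions~\ref{prop:sym2graph} and~\ref{prop:graph2sym} (the paper's own proof is just the one-line ``immediate from'' those two results). The extra bookkeeping you supply --- checking that the two maps are mutually inverse via Proposition~\ref{prop:sext}\ref{sext:1} and the bijectivity of $g$ --- is exactly the detail the paper leaves implicit, so your argument is a faithful, slightly more explicit version of the intended proof.
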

\begin{proof}
Immediate from Proposition~\ref{prop:sym2graph} and~\ref{prop:graph2sym}.
\end{proof}

This construction enables the detection of symmetries as defined
in Section~\ref{sec:lit-symm},
that is, symmetries defined over literals that can appear at various
modal depths of a given formula.
To detect layered symmetries (see Section~\ref{sec:layering}) for logics with the tree model closure
property,  the construction needs to be
modified to capture the notion of layers. This is easy to achieve by just
changing the way literals occurring in the formula are handled
(see~\cite{orbe2012} for details). Properties~\ref{prop:sym2graph} and~\ref{prop:graph2sym} 
(suitably generalized) also hold in this case.

\section{Conclusions and Further Work}\label{sec:conclusion}

The notion of symmetry has been well studied in propositional logic,
and various optimizations of decision procedures based on it are
known.  In this article, we extend the notion of syntactic and
semantic symmetries to many different modal logics using the framework
of coinductive models.  The main contribution is that a 
symmetry $\sigma$ preserves entailments whenever the class is closed
by $\sigma$.  For example, arbitrary  symmetries preserve
entailments in the basic modal logic, but for the hybrid logic
$\mathcal{H}(@)$ we can only consider symmetries that map nominals to
nominals.  The second contribution of the paper is to show that if the
class of models is closed under trees, then the more flexible notion
of layered symmetry also preserve entailments.

To arrive at the previous results, we defined the concept of
$\sigma$-simulation and showed that it preserves $\sigma$-permutation
of formulas.  We then presented permutation sequences $\bar{\sigma}$,
and $\bar{\sigma}$-simulations. Permutation sequences are relevant in
those classes of models that are closed under trees.  This property
enables the use of layered symmetries, a notion that can capture more
symmetries than the ordinary symmetry definition. Indeed, layered
symmetries can be detected independently within atoms at each modal
depth of a formula.  $\bar{\sigma}$-simulations extend the notion of
$\sigma$-simulations to permutation sequences, and enabled us to prove
that layered symmetries also preserve entailment.

Finally, we presented a method to detect symmetries in modal formulas
that reduces the problem to the graph automorphism problem. Given a
formula $\varphi$, the idea is to build a graph in such a way that the
automorphism group of the graph is isomorphic to the symmetry group of
the formula that generated it. A general construction algorithm, suitable for
many modal logics, was presented and its correctness proved. 
The presented graph construction algorithm can be extended to detect
layered symmetries. Preliminary results on modal symmetries
concerning this last construction can be found in~\cite{orbe2012}
where we developed an efficient algorithm to detect symmetries for the
basic modal logic, and empirically verified that many modal problems (both randomly and hand generated) contain
symmetries.


Our ongoing research focuses on the incorporation of symmetry
information into a modal tableau calculi such
as~\cite{KaSmoJoLLI2009,hoffmann2010tab} or modal resolution calculi
such as~\cite{arec:reso08}.  
One promising theme that we will investigate in the future is
permutations involving also modal literals.


\bigskip
\noindent
\textbf{Acknowledgments.}
 This work was partially supported by grants ANPCyT-PICT-2008-306, ANPCyT-PICT-2010-688, the FP7-PEOPLE-2011-IRSES Project
``Mobility between Europe and Argentina applying Logics to Systems'' (MEALS)
and the Laboratoire International Associ\'e ``INFINIS''.

\bibliographystyle{eptcs}
\bibliography{references}


\end{document}